\documentclass[pra,twocolumn,showpacs,groupedaddress,notitlepage,superscriptaddress,floatfix,nofootinbib]{revtex4-2}

\usepackage{subfigure}
\usepackage{braket}
\usepackage{tikz}
\usepackage{amsmath}
\usepackage{listings}
\usepackage{algorithmicx}
\usepackage[ruled]{algorithm} 
\usepackage{algpseudocode}
\usepackage{amsfonts}
\usepackage{float,mathdots}
\usepackage{amssymb}
\usepackage{accents}
\usepackage{amsthm}
\usepackage{hyperref}
\usepackage{blkarray}
\usepackage{svg}
\usepackage{booktabs}
\usepackage{makecell} 

\newcommand\s{\mathbf{s}}

\newtheorem{theorem}{Theorem}[]
\newtheorem{lemma}{Lemma}[]

\newtheorem{proposition}{Proposition}[]

\renewcommand{\[}{\begin{equation}}
\renewcommand{\]}{\end{equation}}

\usepackage{bbm}
\usepackage{appendix}

\hypersetup{
    colorlinks=true,   
    linkcolor=cyan,    
    citecolor=magenta, 
    filecolor=magenta, 
    urlcolor=cyan,     
    runcolor=cyan
}
\usepackage[capitalise]{cleveref} 

\renewcommand\vec{\mathbf}

\newcommand{\sinc}{\mathrm{sinc}}

\newcommand{\cyan}[1]{\textcolor{cyan}{#1}}

\newcommand{\DL}[1]{\cyan{[DL: #1]}}
\renewcommand{\DL}[1]{}

\begin{document}

\title{Steering paths mid-flight for fault-tolerance in measurement-based holonomic gates}

\author{Anirudh Lanka}
\affiliation{Department of Electrical \& Computer Engineering, University of Southern California, Los Angeles, California}

\author{Juan Garcia-Nila}
\affiliation{Department of Electrical \& Computer Engineering, University of Southern California, Los Angeles, California}

\author{Todd A. Brun}
\affiliation{Department of Electrical \& Computer Engineering, University of Southern California, Los Angeles, California}

\begin{abstract}
Continuous measurement–based holonomic quantum computation provides a route to universal logical computation in quantum error correcting codes. We introduce a fault-tolerant framework for implementing measurement-based holonomic gates that leverages continuous measurements with real-time feedback. We show that non-Markovian decoherence is intrinsically suppressed through the quantum Zeno effect, while Markovian errors are identified by the decoding of measurement records to reveal the rotated syndrome subspace populated during the evolution. This information enables steering holonomic paths mid-flight to ensure that the final evolution realizes the target logical gate. We further demonstrate that non-adiabatic effects give rise to measurement-induced errors, and we show that these can also be corrected by an analogous protocol. This approach relaxes the stringent adiabaticity requirement and enables faster implementation of holonomic gates.
\end{abstract}

\maketitle

\section{Introduction}

Quantum information can be encoded into higher-dimensional quantum systems in such a way that the detrimental effects of decoherence, provided they remain sufficiently weak, can be suppressed through appropriate error-correction procedures \cite{Lidar_Brun_2013, Dennis_2002, aharonov1999faulttolerantquantumcomputationconstant, resilient_klz, Shor_FTQC, kitaev_2003}. In principle, this enables the stabilization and precise control of highly complex quantum systems. For quantum information processing to be computationally useful, however, a quantum processor must support the implementation of a universal set of quantum gates. Most quantum error-correcting codes achieve logical operations via multi-qubit interactions, which can cause a single physical error to spread across multiple qubits and compromise fault tolerance. Transversal logical operations avoid this problem and are therefore inherently fault-tolerant \cite{gottesman_transversal}. Nevertheless, the Eastin–Knill theorem establishes that no quantum error-correcting code can realize a universal gate set using only transversal operations \cite{eastin_knill_PRL}. This motivates the exploration of ways to circumvent the Eastin–Knill limitation and enable universal fault-tolerant quantum computation \cite{Sales_Rodriguez_2025, prabhu2022newmagicstatedistillation, Horsman_2012, Besedin_2026, gerhard2025weaklyfaulttolerantcomputationquantum, pogorelov2024experimentalfaulttolerantcodeswitching, meas_free_code_switching}.

Holonomic quantum computation provides a promising framework for achieving this \cite{zanardi_1999, wassner_holonomic_2025, fuentes_guridi_holonomic_2005, Lidar_holonomyDFS, hqc_stab_codes, feng_experiment_nonad, toyoda_experim_holon}. It exploits the geometric structure of quantum state space to implement logical operations through non-Abelian geometric phases acquired during cyclic evolutions. Because these operations depend only on the global geometric properties of the evolution path rather than on its detailed timing, they exhibit an intrinsic robustness against certain classes of control errors and fluctuations \cite{Cesare_2015, oreshkov_2009, buividovich_fidelity_2006, solinas_robustness_2004, chen_robust_2020, liang_robust_2025, shen_accelerated_2023, florio_robust_2006, solinas_stability_2012, fuentes_guridi_holonomic_2005, sarandy_abelian_2006}. Moreover, they can be designed to act transversally or in a manner that limits error propagation to achieve fault tolerance \cite{oreshkov_2009}.

In this paradigm, quantum gates are realized as holonomies associated with closed loops in a suitably chosen manifold of control parameters, acting on degenerate eigenspaces of the Hamiltonian \cite{Lidar_holonomyDFS}. When these eigenspaces coincide with the syndrome spaces of a quantum error correcting code, the resulting holonomies correspond to logical operations. However, this is not the only way to achieve HQC. For instance, Ref.~\cite{mhqc} shows a way to achieve continuous-path holonomies on stabilizer codes purely through measurements of stabilizer generators by virtue of the quantum Zeno effect. The back-action of the measurement drives the evolution and generates the logical gate. Crucially, the measurement outcomes correspond to the syndrome of the \emph{instantaneous} code. This provides an opportunity for the syndrome to be decoded and appropriately control the system to bring it back to the code space. This is unlike conventional Hamiltonian-based HQC, where there are no measurements to reveal errors along the path.

In addition to generating the holonomy, the Zeno effect can also suppress non-Markovian noise \cite{facchi_2003,facchi_qzs, GarciaNila_Brun, lanka2026optimizingcontinuoustimequantumerror, oreshkov_nonmark, chen_brun}. This occurs because continuous measurements suppress the system–bath correlations that would otherwise build up over the bath’s memory time, rendering the dynamics effectively Markovian and reducing the rate of undesired transitions \cite{oreshkov_nonmark, GarciaNila_Brun}. Note that since the code changes at every instant, the measurements push the state toward an \emph{instantaneous} syndrome space via the Zeno effect \cite{Brion2004,Mommers_2022}. This behavior parallels dynamical decoupling combined with a logical gate: just as rapid control pulses average out the system–bath interaction and filter low-frequency environmental noise, frequent measurements truncate the bath’s memory kernel and repeatedly push the system toward the syndrome space while simultaneously accumulating the geometric phase \cite{viola_lloyd,kaveh_lidar_dd}. In both cases, suppression arises from the imposition of a fast external timescale that outpaces the bath dynamics.

When the system–bath interactions are uncorrelated in time (i.e., Markovian), the noise acts instantaneously and has no memory. As a result, noise-induced dynamics occur on a timescale comparable to the measurements themselves. In this regime, repeated measurements of the stabilizers do not inhibit the accumulation of errors: simply measuring the stabilizers cannot suppress the effects of Markovian noise. However, the measurement outcomes correspond to the instantaneous error syndrome, which can be decoded to perform error correction. A logical gate in the HQC paradigm corresponds to a specific path in the control-parameter manifold. Under Markovian noise, the system undergoes stochastic jumps, and the measurements signal a nontrivial syndrome. This information allows for real-time ``steering'' of the control path. Consequently, the system can be brought back to the code space by the end of the evolution while realizing the desired holonomy.

Measurement-based holonomic quantum computation (MHQC) is implemented by adiabatically rotating the code generators while continuously and weakly measuring them. When the measurement rate exceeds the rotation rate, the system remains confined to the instantaneous code space with high probability. Nevertheless, a small probability remains that the measurement back action drives the system into a subspace orthogonal to the rotated code space. Such events are like the stochastic quantum jumps arising from Markovian noise. When the error-correcting code satisfies certain conditions (which we describe later), these jumps can be detected in the measurement record. We can then modify the path in real time and steer the system back into the code space while preserving the desired holonomy. As a result, the protocol affords substantial flexibility in the choice of rotation rate: gates may be executed more rapidly at the cost of an increased jump probability, which can be mitigated through path steering.

\subsection{Generating holonomies}

Suppose that we are given a collection of independent vector subspaces. A natural geometric description of this collection is the complex Grassmannian manifold $\mathcal{B} \equiv G_{N,K}(\mathbb{C})$ defined by
\begin{equation}
    \mathcal{B} = \{\mathbb{P} \in M(N, N; \mathbb{C})|\mathbb{P}^2=\mathbb{P}, \mathbb{P}^\dagger=\mathbb{P}, \text{Tr } \mathbb{P}=k\},
\end{equation}
where $M(N, N; \mathbb{C})$ is the set of all $N \times N$ complex matrices. Hence, each point on $G_{N, K}(\mathbb{C})$ is a $K$-dimensional subspace in the $N$-dimensional complex vector space. The set of basis vectors for a given subspace is not unique; different choices of bases are related by transformations from the unitary group $U(K)$. The \emph{fiber} at a given point on $\mathcal{B}$ consists of all possible orthonormal bases for that subspace. The collection of all such fibers forms a \emph{fiber bundle}, known as the complex Stiefel manifold $\mathcal{F}\equiv S_{N, K}(\mathbb{C})$, which is defined as
\begin{equation}
    \mathcal{F} = \{L \in M(N, K; \mathbb{C})|L^\dagger L=I_K\},
\end{equation}
where $I_K$ is the $K$-dimensional unit matrix. Intuitively, the columns of $L$ form an orthonormal basis for a subspace. In the context of quantum error-correcting codes, a point $\mathbb{P} \in \mathcal{B}$ represents the projector onto the code space, while a point $L$ in the fiber over $\mathbb{P}$ contains the code states.

The left action of an $N$-dimensional unitary matrix provides a transformation within $G_{N, K}(\mathbb{C})$:
\begin{equation}
    U(N) \times \mathcal{B} \rightarrow \mathcal{B}, \;\;\;\; (g, \mathbb{P}) \mapsto g\mathbb{P}g^\dagger.
\end{equation}
Define the projection map $\pi: \mathcal{F} \rightarrow \mathcal{B}$ as
\begin{equation}
    \pi(L) \equiv LL^\dagger.
\end{equation}
The right action of a $K$-dimensional unitary matrix provides a transformation within the fiber associated with a point on $\mathcal{B}$:
\begin{equation}\label{eq:transform_fiber}
    \mathcal{F} \times U(K) \rightarrow \mathcal{F}, \;\;\;\; (L,h) \mapsto Lh.
\end{equation}
Then it is easy to see that the bases $V$ and $Vh$ span the same subspace, and hence correspond to the same point on $G_{N, K}(\mathbb{C})$ since
\begin{equation}
    \pi(Lh) = (Lh)(Lh)^\dagger = Lhh^\dagger L^\dagger = LL^\dagger = \pi(L).
\end{equation}
The tuple $\left(\mathcal{F}, \mathcal{B}, \pi, U(K)\right)$ is the \emph{principal bundle} that we operate on.

Suppose a point $\mathbb{P}_0$ on $\mathcal{B}$ is moved along a curve. Without loss of generality, assume that it is driven by rotating the subspace projector $\mathbb{P}_0$ with a parameterized unitary operator $V(t)$:
\begin{equation}
    \mathbb{P}(t) = V(t)\mathbb{P}_0V^\dagger(t).
\end{equation}
Then the associated curve on $\mathcal{F}$ depends on its \emph{connection}. The canonical connection form on $\mathcal{F}$ is defined as a $\mathfrak{u}(K)$-valued $1$-form on $\mathcal{B}$:
\begin{equation}\label{eq:connection}
    A=L^\dagger(t) \frac{\mathrm{d}L(t)}{\mathrm{d}t}.
\end{equation}
This is the unique connection that is invariant under the transformation in \cref{eq:transform_fiber}:
\begin{equation}
\begin{aligned}
    \tilde{A} &= (L(t)h)^\dagger \frac{\mathrm{d}(L(t)h)}{\mathrm{d}t} \\
    &= h^\dagger Ah + h^\dagger\mathrm{d}h,
\end{aligned}
\end{equation}
which is called the \emph{gauge} transformation.

For each $t$, there exists $V(t) \in \mathcal{F}$ such that $\mathbb{P}(t) = L(t)L^\dagger(t)$. Under the regime of frequent measurements by $\mathbb{P}(t)$, the Zeno effect, and hence we can substitute the state $\ket{\psi(t)} \in \mathbb{C}^N$ with the reduced state vector $\ket{\phi(t)} \in \mathbb{C}^K$:
\begin{equation}\label{eq:reduced_sv}
    \ket{\psi(t)} = L(t)\ket{\phi(t)}.
\end{equation}
The dynamics in the Zeno subspaces is given by
\begin{equation}\label{eq:zeno_dynamics}
    \ket{\mathrm{d}\psi} = -i[H(t), \mathbb{P}(t)]\ket{\psi}\mathrm{d}t,
\end{equation}
(proof in \cref{app:qze}). Here, $\mathbb{P}(t)$ is the projector onto the instantaneous eigenspace, and $H(t)$ is the Hamiltonian that drives the rotation between the eigenspaces.
Substituting \cref{eq:reduced_sv} in \cref{eq:zeno_dynamics}, we obtain
\begin{equation}
    \frac{\mathrm{d}\big(L(t)\ket{\phi(t)}\big)}{\mathrm{d}t} = -i\big[H(t), \mathbb{P}(t) \big] L(t)\ket{\phi(t)}.
\end{equation}
Assuming that the state $\ket{\psi(t)}$ is an eigenstate of $\mathbb{P}(t)$ and using the fact that $L^\dagger(t)L(t) = I_K$, we have
\begin{equation}
    \ket{\mathrm{d}\phi} + L^\dagger\mathrm{d}L\ket{\phi(t)} = 0,
\end{equation}
whose solution can be represented as
\begin{equation}
    \ket{\phi(t)} = \mathcal{P}\exp\left(-\int L^\dagger \mathrm{d}L\right)\ket{\phi(0)},
\end{equation}
where $\mathcal{P}$ is the path-ordering operator. Therefore, $\ket{\psi(t)}$ can be written as
\begin{equation}\label{eq:final_state_with_vertical_lift}
    \ket{\psi(t)} = L(t)\mathcal{P}\exp\left(-\int L^\dagger \mathrm{d}L\right)L^\dagger(0)\ket{\psi(0)}.
\end{equation}
If the subspace evolves through a loop and returns to the initial subspace, $\mathbb{P}(T) = \mathbb{P}(0)$, the holonomy $\Gamma \in U(K)$ is defined as
\begin{equation}
    \Gamma = L^\dagger(0)L(T)\mathcal{P}\exp\left(-\int L^\dagger \mathrm{d}L\right),
\end{equation}
and the final state is
\begin{equation}
    \ket{\psi(T)} = L(0)\Gamma\ket{\phi(0)}.
\end{equation}

A general solution for the curve $L(t)$ follows:
\begin{equation}\label{eq:general_stiefel_curve}
    L(t) = V(t)L(0)h(t),
\end{equation}
where $h(t) \in U(K)$. Intuitively, this can be thought of as a two-step process: 1) rotate the basis states $L(0)$ using $V(t)$ so that $\pi(\tilde{L}(t)) = \mathbb{P}(t)$, where $\tilde{L}(t) \equiv V(t)L(0)$. 2) rotate the basis states within the fiber using $h(t)$ to give $L(t)$. 
If the condition
\begin{equation}\label{eq:horizontal_lift}
    L^\dagger \frac{\mathrm{d}L}{\mathrm{d}t} = 0 \; \,\forall t
\end{equation}
is satisfied, the curve $L(t) \in \mathcal{F}$ is called the \emph{horizontal lift} of the curve $\pi(L(t)) \in \mathcal{B}$ (as shown in \cref{fig:holonomy}). Then the holonomy simplifies to
\begin{equation}
    \Gamma = L^\dagger(0) L(T) \in U(K)
\end{equation}
\begin{figure}
    \centering
    \includegraphics[width=0.8\linewidth]{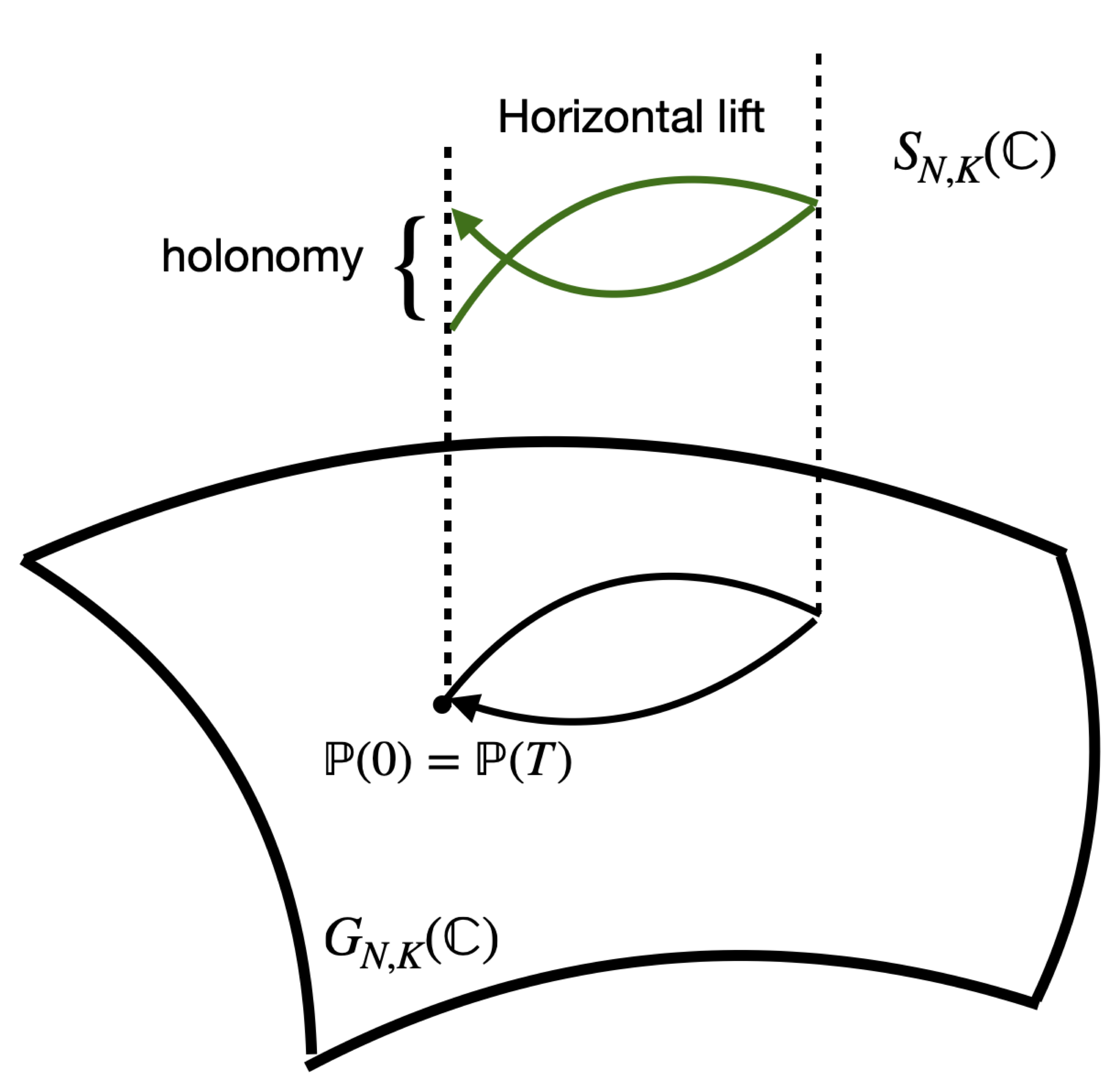}
    \caption{The horizontal lift as a unique curve in $S_{N, K}(\mathbb{C})$ with the base manifold $G_{N, K}(\mathbb{C})$. The difference between the initial point $V(0)$ and the final point $V(T)$ is the holonomy.}
    \label{fig:holonomy}
\end{figure}
Substituting \cref{eq:general_stiefel_curve} into \cref{eq:horizontal_lift}, we obtain a differential equation for $h(t)$:
\begin{equation}\label{eq:lift_deq}
    \frac{\mathrm{d}h(t)}{\mathrm{d}t} = -L^\dagger(0)V^\dagger(t)\frac{\mathrm{d}V(t)}{\mathrm{d}t}L(0)h(t).
\end{equation}

Finally, in the context of quantum computation on a stabilizer code, we want to apply a logical unitary $\bar{U} \in U(K)$ on the code $\mathcal{C} = \mathbb{P}(0)$. Here, applying a logical unitary is equivalent to finding a loop $\mathbb{P}(t) \in \mathcal{B}$ such that $\mathbb{P}(0) = \mathbb{P}(T)$ whose horizontal lift produces the holonomy $\Gamma = \bar{U}$.

\subsection{Measurement-based holonomic quantum computation}\label{sec:mhqc}

Suppose we have a family of \emph{time-dependent} $[\![n, k, d]\!]$ quantum stabilizer codes. Each code has a stabilizer group $S(t)$ with generators $\{g_i(t)\}_{i=1}^{n-k}$. The code at $t=0$ is the standard code with which we start and end; all its stabilizers are Pauli operators with eigenvalues $\pm 1$. Without loss of generality, the Hilbert space can be decomposed into a direct sum of the simultaneous eigenspaces of all the stabilizer generators:
\[
    \mathcal{H}(t) = \bigoplus_{s=0}^{2^{n-k}} \mathcal{H}_s(t).
\]
By convention, the simultaneous $+1$-eigenspace is chosen to be the code space, denoted by $\mathcal{H}_0(t)$. All the eigenspaces are $2^k$-fold degenerate. Define the projector onto the standard code space as
\begin{equation}
    \mathbb{P}_0 \equiv \mathbb{P}(0)=\prod_{j=1}^{n-k}\frac{I+g_j(0)}{2}.
\end{equation}
Suppose the logical gate we wish to apply is 
\[
    G=\mathrm{exp}\left(i\theta H\right),
\]
where $H$ is a Pauli logical operator, and $\theta$ is an arbitrary rotation angle. We can generate a holonomy by uniformly rotating the code space until the rotation forms a loop and back at the original code space; the holonomy corresponds to a logical operation on the standard code.

To rotate the code space, define a Pauli operator $X$ (not to be confused by the single-qubit Pauli matrix $\sigma^x$) that satisfies the following conditions:
\begin{subequations}\label{eq:conditions_on_X_without_noise}
    \begin{align}
        \exists g \in S(0), \; \{X, g\}=0 , \\
        \{X, H\} = 0 , \\
        \exp\left(2\pi i X\right)=I.
    \end{align}
\end{subequations}
Using $X$, we define a family of unitary operators:
\begin{equation}\label{eq:rotation_unitary}
    V(t) = \mathrm{exp}\left(i\omega_H tH\right)\mathrm{exp}\left(i\omega t X\right),
\end{equation}
where $\omega$ is the rotation rate, and $\omega_H = \frac{\theta}{2\pi}\omega$. The projector onto the rotated code space is
\[
    \mathbb{P}(t) = V(t)\mathbb{P}_0V^\dagger(t),
\]
and the corresponding rotated stabilizer generators are 
\[
g(t) = V(t)gV^\dagger(t).
\]
By measuring either the instantaneous code space projector $\mathbb{P}(t)$ or its stabilizer generators $\{g_j(t)\}_{j=1}^{2^{n-k}}$, the state is pushed towards the instantaneous code space. The dynamics of the state undergoing continuous weak measurements of the rotating stabilizers can be described by the time-dependent stochastic Schr\"{o}dinger equation:
\[\label{eq:sse}
\begin{aligned}
    \ket{\mathrm{d}\psi} = -\frac{\kappa}{2}\sum_j\big(&g_j(t) - \langle g_j(t)\rangle \big)^2 \ket{\psi} \mathrm{d}t \\
                &+ \sqrt{\kappa} \big(g_j(t) - \langle g_j(t)\rangle \big)\ket{\psi}\mathrm{d}W_t^{(j)},
\end{aligned}
\]
where $\kappa$ is the measurement strength and $\mathrm{d}W_t^{(j)}$ is the Wiener increment defined by a zero-mean Gaussian random variable with variance $\mathrm{d}t$ corresponding to the $j^\text{th}$ continuous measurement. The procedure is run for a total time of $T=2\pi/\omega$, at which point the code space will have rotated through a loop and back to the original code space; the code state should have acquired the desired holonomy. This rotation rate depends on the measurement rate: the system must be measured faster than the code space is rotated, so that the state remains in the $+1$ eigenspace at all times with high probability. The choice of $\omega$ is crucial: the rotations must be neither too fast, which could cause the state to jump into an error space, nor too slow, which would prolong the gate time.
\begin{proposition}[\cite{mhqc}]
Suppose that a stabilizer code $\mathcal{C}$ is continuously rotated at a rate $\omega$ by $V(t)$. Starting at $\ket{\bar{\psi}(0)} \in \mathcal{C}$, by continuously measuring the rotating stabilizer generators at a rate $\kappa$, the probability of no jumps between the syndrome spaces is
\[\label{eq:confinement_prob}
    1 - p_{\mathrm{jump}} = \frac{1}{2}\bigg[ 1 + \bigg(1 - \frac{\omega \theta^2}{2\pi\kappa}\bigg)\exp\bigg(-\frac{4\pi\omega}{\kappa}\bigg) \bigg].
\]
\end{proposition}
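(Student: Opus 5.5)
The plan is to pass to the rotating frame, average over measurement records to obtain a Lindblad equation, and then reduce to an effective two-level (Bloch) problem between the code space and the single syndrome space that $X$ couples it to.

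\textbf{Rotating frame.} Define $\ket{\psi'(t)} = V^\dagger(t)\ket{\psi(t)}$. In this frame the measured operators become the fixed generators $g_j = g_j(0)$. The rotation appears as an effective Hamiltonian
\[
H_{\mathrm{eff}}(t) = -iV^\dagger \dot V = -\big(\omega X + \omega_H\, e^{-i\omega t X} H e^{i\omega t X}\big).
\]
Since $\{X,H\}=0$, we have $e^{-i\omega tX}He^{i\omega tX} = H e^{2i\omega t X}$. Because $X$ is a Pauli operator, $\exp(2\pi i X)=I$, so $V(T)=e^{i\theta H}$. Hence $\mathbb{P}(T)=\mathbb{P}_0$, and the probability of no jump equals $\mathrm{Tr}[\mathbb{P}_0\rho'(T)]$ in the rotating frame.

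\textbf{Averaged dynamics.} I will average the stochastic Schr\"odinger equation \cref{eq:sse} over the Wiener increments. This gives
\[
\dot\rho' = -i[H_{\mathrm{eff}}(t),\rho'] + \kappa\sum_j\big(g_j\rho' g_j-\rho'\big).
\]
The no-jump probability is a linear function of $\rho'$, so the ensemble average suffices.

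\textbf{Reduction to two levels.} By the conditions \cref{eq:conditions_on_X_without_noise}, $X$ anticommutes with some generator $g$. It therefore maps $\mathcal{H}_0$ to a single syndrome space $\mathcal{H}_s$. The operator $H$ is logical, so $H e^{2i\omega tX}$ keeps the dynamics inside $\mathcal{H}_0\oplus\mathcal{H}_s$.

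On this block, introduce the population difference $z=\mathrm{Tr}[(\mathbb{P}_0-\mathbb{P}_s)\rho']$ and the coherences $\mathbb{P}_0\rho'\mathbb{P}_s$. The dissipator kills these coherences at rate $2\kappa$ per anticommuting generator. The rotation couples the coherences to $z$ at rate $\omega$, with an additional term $\propto\omega_H$ that oscillates as $\sin(2\omega t)$. The result is a closed set of Bloch-type equations.

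\textbf{Solving in the Zeno regime.} For $\kappa\gg\omega$ I will adiabatically eliminate the fast coherences. This gives $\dot z = -\Gamma z$ with $\Gamma = 2\omega^2/\kappa$, so that over $T=2\pi/\omega$ one gets $e^{-\Gamma T}=e^{-4\pi\omega/\kappa}$. The no-jump probability is then $\tfrac12(1+z(T))$, which reproduces the bracket in \cref{eq:confinement_prob}.

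\textbf{Main obstacle.} The hardest step is the $\omega_H$ contribution, which yields the prefactor $1-\omega\theta^2/(2\pi\kappa)$. Here the coupling is time dependent, $H e^{2i\omega tX}$, so the elimination is not stationary. I would handle it by second-order perturbation theory in $\omega_H/\kappa$. The extra leakage should be $\sim\omega_H^2 T/(2\kappa)$, and with $\omega_H=\theta\omega/2\pi$ and $T = 2\pi/\omega$ this gives
\[
\frac{\omega_H^2 T}{2\kappa} = \frac{\theta^2\omega}{4\pi\kappa}.
\]
This is half of the bracket's $\theta^2\omega/(2\pi\kappa)$, so the factor of $2$ must be recovered by checking the averaging of the oscillating coupling and counting all anticommuting generators. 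If that bookkeeping does not close, I will instead solve the Bloch equations exactly in the frame co-rotating with $e^{i\omega tX}$ and expand the result to leading order in $\omega/\kappa$.
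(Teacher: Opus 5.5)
The paper gives no proof of this proposition; it is quoted from Ref.~\cite{mhqc}, so your argument has to stand on its own. Your route (rotating frame, unconditional master equation, adiabatic elimination of the code/error coherences) is sound. The ``missing factor of $2$'' you flag does not exist. Your $\omega_H^2T/(2\kappa)=\theta^2\omega/(4\pi\kappa)$ is a loss of code-space \emph{population}. The bracket's $\theta^2\omega/(2\pi\kappa)$ is instead a decrement of $z(T)$, since it multiplies $e^{-4\pi\omega/\kappa}$. Because $1-p_{\mathrm{jump}}=\tfrac12\bigl(1+z(T)\bigr)$, that decrement costs only $\theta^2\omega/(4\pi\kappa)$ in probability to leading order, which is exactly your number.

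To close it cleanly, take a block spanned by $\ket{v}$ and $X\ket{v}$, with $\ket{v}\in\mathcal{C}$ and $H\ket{v}=h\ket{v}$, $h=\pm1$. Here $H_{\mathrm{eff}}=+\bigl(\omega X+\omega_H He^{2i\omega tX}\bigr)$; your overall sign is off, which is harmless. The coupling is $c(t)=\bra{v}XH_{\mathrm{eff}}\ket{v}=\omega-ih\,\omega_H\sin(2\omega t)$, and the detuning is $2h\,\omega_H\cos(2\omega t)$. The coherence relaxes at $2\kappa\gg 2\omega$, so it follows $c(t)$ adiabatically and no non-stationary perturbation theory is needed. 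Up to corrections of relative order $\omega/\kappa$ you get $\dot z=-\tfrac{2}{\kappa}|c(t)|^2 z$ with $|c|^2=\omega^2+\omega_H^2\sin^2(2\omega t)$, hence
\begin{equation*}
z(T)=\exp\Bigl(-\frac{2}{\kappa}\int_0^T|c(t)|^2\,\mathrm{d}t\Bigr)=e^{-4\pi\omega/\kappa}\,e^{-\omega_H^2T/\kappa}=e^{-4\pi\omega/\kappa}\,e^{-\theta^2\omega/(2\pi\kappa)}.
\end{equation*}
Its expansion to first order in $\theta^2\omega/\kappa$ is the stated bracket. The two parts of $c(t)$ are in quadrature, so there is no cross term. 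Therefore $|c|^2$, and hence $z(T)$, is independent of $h$, which is why the result does not depend on the initial logical state.

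So drop both remedies you propose. In particular, do not ``count all anticommuting generators''. If $m$ generators anticommute with $X$, the coherence in each block decays at $2m\kappa$, and \emph{both} exponents are divided by $m$. That would destroy the $e^{-4\pi\omega/\kappa}$ you already have. The stated formula is the case $m=1$ (the single stabilizer measurement that drives the rotation), and you should state this as a hypothesis.

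Two further points need to be made explicit. First, $\mathcal{H}_0\oplus\mathcal{H}_s$ has dimension $2^{k+1}$, so the coherence block $\mathbb{P}_0\rho'\mathbb{P}_s$ does not by itself give scalar Bloch equations. You need the two-dimensional block decomposition above, built on an $H$-eigenbasis of $\mathcal{C}$. It works because $H_{\mathrm{eff}}$ and every $g_j$ are block diagonal, and generators commuting with $X$ act as $+1$ on both members of a block, so the diagonal blocks of $\rho'$ evolve autonomously. Second, $\mathrm{Tr}[\mathbb{P}_0\rho'(T)]$ is the probability of \emph{ending} in the code space, i.e.\ of an even number of jumps, and that is exactly what the $\tfrac12[1+\cdots]$ form encodes. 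It agrees with the literal zero-jump probability $\exp\bigl(-\kappa^{-1}\int_0^T|c|^2\,\mathrm{d}t\bigr)$ only to first order, so say explicitly which quantity you are computing.
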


It is helpful to represent the dynamics in terms of a density matrix, rather than a state vector, since we analyze the effect of noise processes that can create mixtures. The density matrix corresponding to $\ket\psi$ is $\rho\equiv \ket{\psi}\bra{\psi}$. To find the dynamics of $\rho(t)$, apply the It\^o product rule:
\[
\begin{split}
    \mathrm{d}\rho &= \ket{\mathrm{d}\psi}\bra{\psi} + \ket{\psi}\bra{\mathrm{d}\psi} + \ket{\mathrm{d}\psi}\bra{\mathrm{d}\psi} \\
    &= \kappa \sum_j \mathcal{D}[g_j(t)]\rho \mathrm{d}t + \sqrt{\kappa}\mathcal{H}[g_j(t)]\rho \mathrm{d}W_t^{(j)},
\end{split}
\]
where the superoperators $\mathcal{D}$ and $\mathcal{H}$ are defined by
\[
\begin{split}
    \mathcal{D}[A]\rho &\equiv A\rho A^\dagger - \frac{1}{2}\{A^\dagger A, \rho\} \\
    \mathcal{H}[A]\rho &\equiv A\rho + \rho A^\dagger - 2\braket{A}\rho.
\end{split}
\]
It is important to note that the evolution is driven purely by the back action of a \emph{single} stabilizer measurement \cite{mhqc}. Therefore, the measurements may be taken to be non-selective. However, to perform error correction (as discussed later), we make use of the measurement outcomes.

If the adiabatic condition is met and the system always occupies the instantaneous code space, the instantaneous code state can be obtained by finding the horizontal lift:
\begin{lemma}\label{lemma:instant_code_state}
    The instantaneous code state with the evolution path $V(t)$ is 
    \begin{equation}
        \ket{\bar{\psi}(t)} = V(t)\exp(-i \bar{\theta}(t)H)\ket{\bar{\psi}},
    \end{equation}
    where $\bar{\theta}(t) = \frac{\theta}{4\pi}\sin(2\omega t)$ is the partial logical rotation angle.
\end{lemma}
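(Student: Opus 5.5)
The plan is to compute the horizontal lift of the curve $\mathbb{P}(t)=V(t)\mathbb{P}_0V^\dagger(t)$ explicitly. We use the ansatz \cref{eq:general_stiefel_curve}, $L(t)=V(t)L(0)h(t)$, and solve the lift equation \cref{eq:lift_deq} for $h(t)$. By \cref{eq:final_state_with_vertical_lift}, along a horizontal lift the path-ordered exponential is trivial, so the instantaneous code state is $\ket{\bar\psi(t)}=L(t)L^\dagger(0)\ket{\bar\psi}=V(t)L(0)h(t)L^\dagger(0)\ket{\bar\psi}$. The whole task therefore reduces to computing the generator $L^\dagger(0)V^\dagger\dot V L(0)$.

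\emph{Step 1: differentiate $V$.} From \cref{eq:rotation_unitary},
\begin{equation*}
V^\dagger\dot V = i\omega_H\, e^{-i\omega t X} H e^{i\omega t X} + i\omega X .
\end{equation*}
Since $X$ is a Pauli operator, $X^2=I$. Using $\{X,H\}=0$ we can move $H$ through the exponential, $e^{-i\omega tX}H = He^{i\omega tX}$, which gives
\begin{equation*}
e^{-i\omega tX}He^{i\omega tX}=He^{2i\omega tX}=\cos(2\omega t)H+i\sin(2\omega t)HX .
\end{equation*}

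\emph{Step 2: project onto the code space.} Sandwich the result between $L^\dagger(0)$ and $L(0)$, which is equivalent to compressing by $\mathbb{P}_0$. By the first condition in \cref{eq:conditions_on_X_without_noise}, $X$ anticommutes with some stabilizer $g\in S(0)$, so it maps $\mathcal{C}$ into an orthogonal syndrome space. Hence $\mathbb{P}_0X\mathbb{P}_0=0$. As a logical operator, $H$ commutes with every element of $S(0)$, so $HX$ also anticommutes with $g$, and $\mathbb{P}_0HX\mathbb{P}_0=0$ as well. Only the $\cos(2\omega t)H$ term survives, giving
\begin{equation*}
L^\dagger(0)V^\dagger\dot V L(0)=i\omega_H\cos(2\omega t)\,\bar H, \qquad \bar H\equiv L^\dagger(0)HL(0).
\end{equation*}
This is the step where the hypotheses on $X$ do their work. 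The only delicate point is checking that the cross term $HX$ really drops out, and I expect it to be the main thing to verify carefully.

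\emph{Step 3: integrate and lift back.} The generator is proportional to a fixed matrix $\bar H$ at all times, so the path ordering is trivial. Integrating \cref{eq:lift_deq} with $h(0)=I_K$ gives
\begin{equation*}
h(t)=\exp\!\Big(-i\omega_H\frac{\sin 2\omega t}{2\omega}\bar H\Big)=\exp\big(-i\bar\theta(t)\bar H\big),
\end{equation*}
since $\omega_H/(2\omega)=\theta/(4\pi)$. Finally, $H$ preserves $\mathcal{C}$, so $L(0)f(\bar H)L^\dagger(0)=f(H)\mathbb{P}_0$. With $\mathbb{P}_0\ket{\bar\psi}=\ket{\bar\psi}$ this yields $\ket{\bar\psi(t)}=V(t)\exp(-i\bar\theta(t)H)\ket{\bar\psi}$, as claimed.
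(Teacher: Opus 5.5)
Your proposal is correct and follows essentially the same route as the paper's proof: take $L(t)=V(t)L(0)h(t)$, compute $V^\dagger\dot V$, kill the $X$ and $HX$ terms by compressing onto the code space, integrate the $\cos(2\omega t)\bar H$ generator to get $h(t)$, and lift back to the full space. Your final step is in fact more careful than the paper's. The paper asserts $L^\dagger(0)e^{A}L(0)=\exp[L^\dagger(0)AL(0)]$ for arbitrary $A$, which is false in general, whereas you correctly rely on $H$ preserving $\mathcal{C}$.
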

\begin{proof}
    See \cref{app:rot_code_state}.
\end{proof}

A quantum error correcting code is always associated with a (non-unique) set $\mathcal{E}$ of correctable errors. During the holonomic procedure, the code continuously evolves through a family of instantaneous code spaces, eventually returning to the original code space at the end of the process. Therefore, we require that \emph{all} the rotated codes are still able to correct $\mathcal{E}$. This means that they must satisfy the Knill-Laflamme conditions for error correction:
\begin{equation}\label{eq:qec_cond}
    \mathbb{P}(t)E_b^\dagger E_a \mathbb{P}(t) = \gamma_{ab}(t)\mathbb{P}(t),
\end{equation}
which is equivalent to
\begin{equation}
    \mathbb{P}_0 \big[V^\dagger(t)E_b^\dagger V(t)\big] \big[V^\dagger(t)E_a V(t)\big] \mathbb{P}_0 = \gamma_{ab}(t)\mathbb{P}_0.
\end{equation}
In other words, we need the rotated errors $\{V^\dagger(t)E_aV(t)\}$ to be correctable by the original code. The following theorem provides sufficient conditions on the rotation operator $X$ to satisfy the error correcting conditions:
\begin{theorem}[\cite{mhqc}]\label{lemma:qec_conditions}
    For a stabilizer code with correctable error set $\mathcal{E}$, a set of sufficient conditions such that all the rotated codes can correct $\mathcal{E}$ is
    \begin{itemize}
        \item Hamming weight of $X> d-1$.
        \item All stabilizers within Hamming distance $d-1$ ($d-2$ for even $d$) of $X$ anti-commute with $X$.
        \item All logical operators within Hamming distance $d-1$ ($d-2$ for even $d$) of $X$ commute with $X$ and anti-commute with $H$.
    \end{itemize}
\end{theorem}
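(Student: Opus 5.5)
We reduce the statement to the Knill--Laflamme condition for the original code on conjugated errors, as in \cref{eq:qec_cond}: it suffices that for all $E_a,E_b\in\mathcal{E}$ the operators $V^\dagger(t)E_aV(t)$ satisfy $\mathbb{P}_0 [V^\dagger E_b^\dagger V][V^\dagger E_a V]\mathbb{P}_0=\gamma_{ab}(t)\mathbb{P}_0$.

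Since $V^\dagger E_b^\dagger V V^\dagger E_a V = V^\dagger E_b^\dagger E_a V$, we may take $E_b^\dagger E_a$ to be a single Pauli $P$ of weight at most $2t\le d-1$, where $t=\lfloor (d-1)/2\rfloor$. We therefore study $\mathbb{P}_0 V^\dagger(t) P V(t)\mathbb{P}_0$ with $V(t)=e^{i\omega_H tH}e^{i\omega tX}$.

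The key tool is that for Paulis $A$ and $B$, $e^{-i\alpha A}Be^{i\alpha A}$ equals $B$ if $[A,B]=0$, and equals $B(\cos 2\alpha + i\sin 2\alpha\, A)$ (up to ordering and sign) if $\{A,B\}=0$. Conjugating $P$ first by $e^{i\omega_H tH}$ and then by $e^{i\omega tX}$ produces a linear combination of at most four Paulis: $P$, $PH$, $PX$, and $PHX$, with trigonometric coefficients. We then compute $\mathbb{P}_0 Q\mathbb{P}_0$ for each term $Q$. It equals $\pm\mathbb{P}_0$ if $Q\in\pm S(0)$, equals $\mathbb{P}_0 Q$ restricted to the code space if $Q$ is a nontrivial logical operator, and vanishes if $Q$ anticommutes with some stabilizer. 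The condition is met exactly when no surviving term is a nontrivial logical operator.

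We then do case analysis on each term, using the weight bounds.
\begin{itemize}
\item \emph{$Q=P$.} This term is harmless because the original code corrects $\mathcal{E}$: $P$ is either a stabilizer (giving $\gamma\mathbb{P}_0$) or detectable.
\item \emph{$Q=PX$ or $Q=PHX$.} The distance between $P$ and $X$ (or $HX$) is at most $\mathrm{wt}(P)\le d-1$. If $PX$ were a stabilizer $s$, then $X=Ps$ up to phase, so $X$ lies within distance $d-1$ of the stabilizer $s$. The second hypothesis then forces $\{X,s\}=0$, but $X$ commutes with $Ps=X$, which is a contradiction. The same argument shows $PX$ cannot be a logical operator $\ell$: by the third hypothesis $\ell$ would commute with $X$ and anticommute with $H$, and we check that this either contradicts the required commutation pattern or makes the term's coefficient vanish. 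The first hypothesis, $\mathrm{wt}(X)>d-1$, rules out $X$ itself being reachable, i.e.\ it excludes $P\propto X$ so that $PX$ is not the identity.
\item \emph{$Q=PH$.} This is handled by noting that $H$ is a logical operator, so $PH$ is logical exactly when $P$ is, and $P$ is not logical because its weight is below $d$.
\end{itemize}
The even-$d$ refinement ($d-2$) enters where $P$ is a detectable but uncorrectable product of weight $d-1$.

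The main obstacle is the bookkeeping for $PX$ and $PHX$. We must carefully match which commutation relations, among those from the hypotheses and those from the conjugation formula, make each coefficient vanish or make the term anticommute with a stabilizer. We must also confirm that the resulting $\gamma_{ab}(t)$ is a scalar independent of the logical state. I would first work out the single-rotation case $\omega_H=0$ and then add the $H$ factor.
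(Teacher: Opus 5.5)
Your reduction to $\mathbb{P}_0V^\dagger(t)PV(t)\mathbb{P}_0$ with $P=E_b^\dagger E_a$, $\mathrm{wt}(P)\le 2t$, and your expansion into the four terms $P$, $PH$, $PX$, $PHX$ are the right frame, but the decisive step fails. The ``contradiction'' you derive when $PX=s\in S(0)$ is a non sequitur: that $X$ commutes with $Ps=X$ says nothing about $[X,s]$. In fact $(PX)X=P$ and $X(PX)=XPX$, so $s$ anticommutes with $X$ exactly when $\{P,X\}=0$. Hypothesis 2 therefore does not exclude $PX\in S(0)$; it only forces $\{P,X\}=0$. You do not need a contradiction there anyway, since a stabilizer term contributes a multiple of $\mathbb{P}_0$. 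The term that actually threatens the Knill--Laflamme condition is $PHX=\pm(PX)H$. It is a nontrivial logical whenever $PX$ is a stabilizer, or a logical inequivalent to $H$, and your proposal never determines when this term occurs; the logical sub-case is left as ``we check''. Your $Q=PH$ case is also wrong for degenerate codes. If $P\in S(0)$ (e.g.\ $\sigma^z_1\sigma^z_2$ in the Shor code), then $PH$ is a nontrivial logical although $P$ is not.

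The missing idea is to record which terms are actually present, and to convert this into commutation properties of $PX$ using $\{X,H\}=0$. Specifically, $PX$ commutes with $X$ iff $[P,X]=0$, and $PX$ commutes with $H$ iff $\{P,H\}=0$. From the conjugation formula:
$PH$ appears only if $\{P,H\}=0$;
$PX$ appears only if $\{P,X\}=0$;
$PHX$ appears only if $\{P,H\}=0$ and $[P,X]=0$, with coefficient proportional to $\sin(2\omega_H t)\sin(2\omega t)$.

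Each present term is then handled as follows. A present $PH$ means $P$ anticommutes with $H$, so $P\notin S(0)$; and $P$ is not a nontrivial logical because its weight is below $d$. Hence $PH$ lies outside the normalizer and is annihilated by $\mathbb{P}_0$. A present $PX$ anticommutes with $X$, so by the first half of hypothesis 3 it is not a logical; it is either detectable or a stabilizer giving a scalar. A present $PHX$ means $PX$ commutes with both $X$ and $H$. The Hamming distance from $PX$ to $X$ is $\mathrm{wt}(P)\le 2t$, so hypothesis 2 excludes $PX\in S(0)$, and the ``anticommute with $H$'' half of hypothesis 3 excludes $PX$ being a nontrivial logical. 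Thus $PX$, and hence $PHX$, lies outside the normalizer. Hypothesis 1 is the special case $P\propto X$, where $PHX\propto H$. Finally, the even-$d$ refinement simply reflects $2t=d-2$: contrary to your remark, no products of weight $d-1$ arise there at all.
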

The key point in these conditions is that the syndrome of the rotation operator $X$ must be different from that of the product of any pair of correctable errors. A simple numerical search reveals that these conditions cannot be satisfied for the perfect $[\![5, 1, 3]\!]$ and the Steane $[\![7, 1, 3]\!]$ codes; although, they can be satisfied for the Shor $[\![9, 1, 3]\!]$ code. The reason is the lack of enough syndromes for the former codes to accommodate $X$ and $EX \; \forall E \in \mathcal{E}$ as well as all the other correctable errors. However, even in a case where a code does not satisfy these conditions, one can append ancilla qubits to make this protocol work \cite{mhqc}.

\subsection{Plan of this paper}

The rest of this paper is organized as follows. In \cref{sec:non_markov_noise}, we consider how the effects of non-Markovian noise can be inherently suppressed while performing measurement-based holonomic gates. In particular, we consider two simple examples: static and $1/f$ noise, modeled here at the level of a single fluctuator with Lorentzian spectral density, and numerically show that when the measurement strength is sufficiently greater than the noise strength, the influence of the noise can be mitigated. We also show that when the noise is Markovian (uncorrelated in time), its effects cannot be suppressed simply by measuring the stabilizers: we need to perform error \emph{correction}. We show a way to do that in \cref{sec:markovian_noise}. Just as the logical gate is characterized by the evolution path, error correction in this paradigm is also performed by modifying the evolution path.

The measurements are inherently stochastic: with a small probability, the measurements can push the system toward a subspace orthogonal to the instantaneous code space. This probability is proportional to the speed of the gate: the faster the gate, the higher the probability. However, we can modify the path to steer the system back to the code space while still performing the correct logical operation. In \cref{sec:acc_gates}, we show that by changing the path (which requires no additional overhead), we can perform the gate faster for a given success probability. We provide our concluding remarks and directions for future research in \cref{sec:discussion}.

\section{Suppressing non-Markovian noise}\label{sec:non_markov_noise}

\begin{figure}
    \centering
    \includegraphics[width=0.48\textwidth]{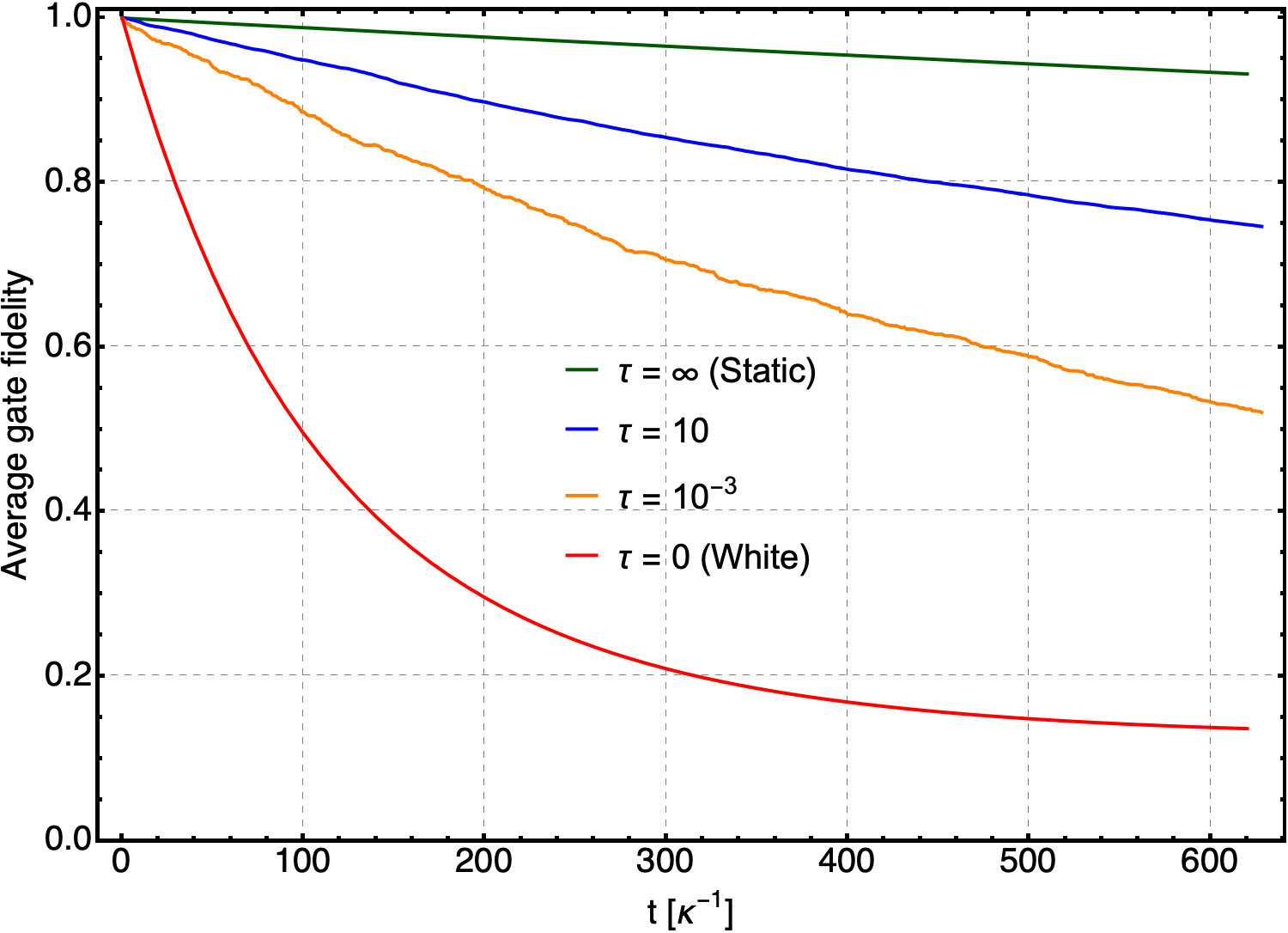}
    \caption{Average gate fidelity in the presence of static, $1/f$ noise (with different values of $\tau$), and white noise (Markovian errors). Frequent measurements impose a Zeno regime on the system dynamics and suppress the non-Markovian noise, which slows the decay of the average code state fidelity. For a given measurement rate, decreasing the pulse lifetime $\tau$ makes the temporal correlations decay faster so the dynamics are effectively more Markovian. In the limit of purely Markovian noise, frequent measurements cannot suppress the  error. The plots represent an ensemble average over 1000 trajectories.}
    \label{fig:nonMarkov}
\end{figure}

It is well known that frequent measurements can freeze a system in an eigenspace of the measured observable, a phenomenon known as the quantum Zeno effect. Numerous efforts have explored leveraging this effect to suppress non-Markovian errors. In this work, we investigate the suppression of non-Markovian noise even when the measurement observables are continuously rotated. Specifically, we consider three different noise models: static noise, $1/f$ noise, and white noise. In these examples, the stochasticity in the system dynamics play the role of a ``bath'' in the conventional open-system settings. Static noise is an extreme case with infinite width bath memory kernel, white noise has zero memory, and $1/f$ noise has a finite memory width, which is in the intermediate regime. Here, and for the rest of the paper, we demonstrate the features of this protocol using the $[\![3, 1, 3]\!]$ bit-flip code with the code space \[ \mathcal{H}_0 = \mathrm{span}(\{\ket{000}, \ket{111}\}), \] and stabilizer generators
\begin{subequations}
    \begin{align}
        g_1 &= \sigma^z_1 \sigma^z_2, \\
        g_2 &= \sigma^z_2 \sigma^z_3.
    \end{align}
\end{subequations}
This code was chosen purely for simplicity and numerical convenience, and the techniques can be used on any stabilizer code. Suppose that we would like to apply a $\bar{\sigma^z}$ rotation on $\ket{\bar{+}}$. One choice for the rotation operator $X$ that satisfies \cref{eq:conditions_on_X_without_noise} is
\[
    X = \sigma^x_1\sigma^z_3.
\]
Note that this choice of the stabilizer generators and operator $X$ does not satisfy the conditions outlined in \cref{lemma:qec_conditions}. However, the suppression of non-Markovian noise arises solely from frequent \emph{non-selective} stabilizer measurements; that is, no active error correction is being performed. As a result, it is not necessary for the syndrome of $X$ to differ from that of the product of any pair of correctable errors. Hence, we do not need to append ancilla qubits to the code.

\subsection{Static noise}

We begin by analyzing a model in which the errors arise from constant Hamiltonian terms:
\[
    H_\mathrm{err} = \sum_{j} \lambda_j \sigma_j^\alpha.
\]
Here, the index $j$ denotes the qubits, $\lambda_j \sim U(-\epsilon_\mathrm{static}, \epsilon_\mathrm{static})$, where $\epsilon_\mathrm{static}$ is the maximum amplitude and $U$ denotes the uniform distribution. This corresponds to a noise model with perfectly correlated (i.e., time-independent) fluctuations. In particular, the bath-correlation function (dropping the qubit index) is
\[
    C_\mathrm{static}(t) = \braket{\lambda(0), \lambda(t)} = \frac{\epsilon_\mathrm{static}^2}{3},
\]
i.e., it has strong temporal correlations, and the associated spectral density is
\[
    J_\mathrm{static}(\Omega) = \frac{\epsilon_\mathrm{static}^2}{3}\delta(\Omega).
\]
Let us also introduce the \emph{decoherence rate} as the time integral of the bath correlation function:
\[
    \Gamma_\mathrm{static} = \int_0^T C_\mathrm{static}(t)\mathrm{d}t = \frac{\epsilon_\mathrm{static}^2 T}{3}.
\]
\cref{fig:nonMarkov} illustrates the average system behavior under constant Hamiltonian noise and measurements of rotated stabilizers. Specifically, we plot the average gate fidelity \[f(t) = \bra{\bar\psi(t)} \rho(t) \ket{\bar\psi(t)}\] as a function of time (in units of $1/\kappa$). We observe that frequent measurements effectively suppress non-Markovian noise, slowing the decay of the code state fidelity.

\subsection{$1/f$ noise}

To model the effect of low-frequency $1/f$ fluctuations using a slowly varying stochastic Hamiltonian composed of correctable error terms, we consider a simplified model of a single effective fluctuator, whose time-dependent noise Hamiltonian is
\[\label{eq:low_freq_ham}
    H_{\mathrm{err}}(t) = \sum_{j} \lambda_j(t) \sigma^x_j,
\]
where $j$ labels each qubit and $\lambda_j(t)$ is a time-dependent scalar function. Each $\lambda_j$ is a sum of exponentially-decaying random pulses occurring at random times with a fixed rate $\gamma$:
\[
    \lambda_j(t) = \sum_{K} \epsilon_K \theta(t-t_{j_K}) \exp\left(-\frac{t-t_{j_K}}{\tau}\right).
\]
Here, the index $K$ labels the pulses; the pulse amplitude is sampled from a uniform distribution centered around zero: \(\epsilon_K \sim U(-\epsilon_{1/f}, \epsilon_{1/f})\), where $\epsilon_{1/f}$ is the maximum pulse amplitude, $\theta(t)$ is the Heaviside step function and $\tau$ is the pulse lifetime. We choose the inter-arrival times in $\lambda_j(t)$ to be independent and exponentially distributed (i.e., a Poisson process):
\[
    \Delta t \sim \exp(\gamma).
\]
The bath correlation function for $1/f$ noise is given by
\[
    C_{1/f}(t) = \frac{\gamma \tau \epsilon_{1/f}^2}{6}e^{-t/\tau}.
\]
Note that this is an example of a wide-sense stationary process. Here, the temporal correlations exist in the intermediary regime between the static and the white noise processes. The associated spectral density is
\[
    J_{1/f}(\Omega) = \frac{\epsilon_{1/f}^2}{3}\frac{\gamma \tau^2}{1+\Omega^2\tau^2}.
\]
The decoherence rate is given by
\[
\begin{split}
    \Gamma_{1/f} &= \frac{\epsilon_{1/f}^2}{3}\frac{\gamma \tau^2}{2}(1 - e^{-T/\tau}) \\
    &\stackrel{T \gg \tau}{\approx} \frac{\epsilon_{1/f}^2}{3}\frac{\gamma \tau^2}{2}.
\end{split}
\]
To keep the same error magnitude as in the static noise case, we set $\Gamma_\mathrm{static} = \Gamma_{1/f}$, which gives
\[ 
    \epsilon_{1/f} = \epsilon_\mathrm{static}\sqrt{\frac{2T}{\gamma \tau^2}}.
\] 
The pulse lifetime $\tau$ plays a crucial role in understanding the Markovianity of this noise process. It denotes how long a ``kick'' introduced by the noise lasts. Meanwhile, the role of the Zeno effect is to suppress temporally correlated errors. Recall that the measurements are made at a rate of $\kappa$, which introduces a timescale of $1/\kappa$. This can be thought of as the time at which we make a projective measurement \emph{on average}. Thus, if $\tau > 1/\kappa$, the noise remains temporally correlated for longer than a measurement cycle, and the suppression is greater. By contrast, if $\tau < 1/\kappa$, the system experiences kicks from statistically independent pulses, making the dynamics more Markovian. This is illustrated in \cref{fig:nonMarkov} with $\tau=10$, and $\tau=10^{-3}$ while keeping $\kappa=1$. The limit of $\tau=\infty$ ($\tau=0$) recovers the static (white) noise case.

While a single fluctuator with Lorentzian spectrum is not itself of the $1/f$ form, log-uniform distributed ensembles of independent fluctuators are known to give rise to effective $1/f$ noise over a wide frequency range \cite{Paladino_2014,Milotti_2002,Machlup_1954}.

\subsection{White noise}

Here we show that stabilizer measurements alone are insufficient to suppress Markovian noise: active error correction is required to correct Markovian errors. The error Hamiltonian is given by
\[
    H(t) = \sum_i \lambda_j(t) \sigma_j^x,
\]
where $\lambda_j(t)$ denotes a white noise process, i.e, having zero mean with the bath correlation function
\[
    C(t) = \gamma_\mathrm{white} \delta(t).
\]
This noise process has no temporal correlations. The state dynamics under this Hamiltonian can also be represented using a Lindblad equation with rates $\gamma_{\mathrm{white}_j}$:
\[
    \dot\rho = \sum_j \gamma_{\mathrm{white}_j} \mathcal{D}[\sigma_j^x]\rho.
\]
(see \cref{app:white_noise_lindblad} for proof). To compare the error strength to the static and $1/f$ noise cases we set $\Gamma_\mathrm{white} = \Gamma_{1/f} = \Gamma_\mathrm{static}$. In \cref{fig:nonMarkov} we show that, in the presence of Markovian errors, stabilizer measurements by themselves fail to preserve the code state fidelity. For Markovian noise, the probability of a state transition is of order $\mathrm{d}t$ for a time step. Errors of this type cannot be suppressed by frequent measurements, and full error correction is required to protect the states. For non-Markovian noise, by contrast, the probability of a state transition is of order $\mathrm{d}t^2$ in a time step. This is why these transitions can be suppressed by the quantum Zeno effect \cite{oreshkov_nonmark}. In the following section, we show a way to \emph{correct} Markovian errors.

Beyond stochastic Hamiltonian models, non-Markovian dynamics have been investigated using a variety of frameworks, including the Nakajima–Zwanzig equation \cite{Zwanzig1960}, the time-convolutionless (TCL) master equation \cite{DragomirTCL4,DragomirTCL6,Xia_2024}, the post-Markovian master equation (PMME) \cite{Shabani2005,Zhang2021,Sutherland2018}, system–bath coupling models \cite{Pang2017}, and numerical approaches such as hierarchical equations of motion (HEOM) \cite{Tanimura2020}. We conjecture that this protocol also suppresses noise arising from this broader class of non-Markovian models.

\section{Correcting Markovian noise}\label{sec:markovian_noise}

The measurement-based holonomic procedure implicitly involves measuring the stabilizer generators; different measurement outcomes correspond to the different possible \emph{rotated} syndrome spaces in which the system could be. While we want the system to stay in the rotated code space at all times, environmental interactions could cause transitions out of it. Suppose that a correctable error $E$ happens at time $\tau$:
\[
\label{eq:inst_err_st_before_meas}
\ket{\tilde\psi_E(\tau)} = E\ket{\bar{\psi}(\tau)}.
\]
(The meaning of tilde will become apparent later.) Using \cref{lemma:instant_code_state}, and defining $U(t) \equiv V(t)\exp(-i\bar{\theta}(t)H)$, we have
\[
\begin{split}
     \ket{\tilde\psi_E(t)} &= E U(t)\ket{\bar{\psi}} \\
                                  &= U(t) \big[U^\dagger(t)EU(t)\big] \ket{\bar{\psi}}\\
                                  &= U(t) E(t)\ket{\bar{\psi}},
\end{split}
\]
where we have defined the \emph{rotated} error
\[
E(t) \equiv U^\dagger(t)EU(t).
\]
Note that this rotated error is a linear combination of $E$, $EX$, $EH$, and $EXH$.

Define $\s_A$ to be the binary vector of size $n-k$ denoting the syndrome of the operator $A$ (the $j^\mathrm{th}$ entry is $0$ if the corresponding stabilizer generator $g_j$ commutes with $A$, and $1$ otherwise). Since $H$ is a logical operator, it commutes with all the unrotated stabilizers and thus has a trivial syndrome. However, $E$ and $EX$ have non-trivial syndromes; moreover, they have distinct syndromes as a consequence of \cref{lemma:qec_conditions}:
\[
\s_{EH} = \s_E \neq \s_{EX} = \s_{EXH}.
\]
Denote the two distinct syndrome spaces by $\mathcal{H}_E(0)$ and $\mathcal{H}_{EX}(0)$, with projectors $\mathbb{P}_E$ and $\mathbb{P}_{EX}$, respectively. We can then write the rotated projectors as
\begin{subequations}
\begin{align}
    \mathbb{P}_E(t) &= V(t)\mathbb{P}_EV^\dagger(t), \\
    \mathbb{P}_{EX}(t) &= V(t)\mathbb{P}_{EX}V^\dagger(t).
\end{align}
\end{subequations}
The subsequent measurement \emph{probabilistically} pushes the state towards \emph{one} of the rotated syndrome spaces, potentially introducing a logical error in the process; we refer to the resulting state as the \emph{instantaneous error state}.

\subsection{Environmental errors}\label{sec:env_errors}

The instantaneous error state prior to measurement is given in \cref{eq:inst_err_st_before_meas}. During the holonomic procedure the stabilizers are continuously measured while simultaneously undergoing rotation. As a result, the state $\ket{\tilde\psi_E(t)}$ is driven toward either $\mathcal{H}_E(t)$ or $\mathcal{H}_{EX}(t)$ depending on the structure of $E$. We denote the post-measurement state by $\ket{\psi_E(t)}$. It is more convenient to compute this state using the original (unrotated) code rather than the rotated one. Specifically, we have the following proposition:
\begin{proposition}
    Measuring $\ket{\tilde{\psi}_E(t)}$ in the rotated stabilizer basis is equivalent to measuring $E(t)\ket{\bar\psi}$ in the original stabilizer basis, followed by a rotation via $U(t)$.
\end{proposition}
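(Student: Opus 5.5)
The plan is to use the unitary covariance of the measurement model: conjugating every measured operator by a fixed unitary is the same as first undoing that unitary on the state, measuring the unrotated operators, and then redoing it. First, I would write the rotated measurement operators in terms of the original ones. Since $g_j(t)=V(t)g_jV^\dagger(t)$ and the logical rotation $\exp(-i\bar\theta(t)H)$ commutes with every unrotated stabilizer ($H$ is a logical operator), we also have $g_j(t)=U(t)g_jU^\dagger(t)$ with $U(t)=V(t)\exp(-i\bar\theta(t)H)$. By the same argument, every rotated syndrome projector satisfies $\mathbb{P}_s(t)=U(t)\mathbb{P}_sU^\dagger(t)$; in particular this holds for $\mathbb{P}_E(t)$ and $\mathbb{P}_{EX}(t)$.

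Second, I would compare the two measurements for a single measurement step. Take a projective outcome $s$, or equivalently a weak-measurement Kraus operator $M_s(t)=f(\{g_j(t)\})$, which is a function of the commuting rotated generators. Functional calculus gives $M_s(t)=U(t)M_sU^\dagger(t)$, where $M_s=f(\{g_j\})$. Applying this to $\ket{\tilde\psi_E(t)}=U(t)E(t)\ket{\bar\psi}$, the factor $U^\dagger(t)U(t)$ cancels:
\[
M_s(t)\ket{\tilde\psi_E(t)}=U(t)\,M_s\,E(t)\ket{\bar\psi}.
\]
The outcome probabilities also agree, $\|M_s(t)\ket{\tilde\psi_E(t)}\|^2=\|M_sE(t)\ket{\bar\psi}\|^2$, because $U(t)$ is unitary. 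This shows that the measurement statistics and the normalized post-measurement states coincide up to the final rotation by $U(t)$, which is the claim. For the continuous description I would make the same check at the level of \cref{eq:sse}. Writing $\ket{\psi}=U(t)\ket{\chi}$, the expectation values are invariant, $\langle g_j(t)\rangle_\psi=\langle g_j\rangle_\chi$, so the measurement drift and diffusion terms map to those of the unrotated stabilizers acting on $\ket\chi$, driven by the same Wiener increments $\mathrm{d}W_t^{(j)}$.

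The main obstacle is the time dependence of $U(t)$. In the continuous picture, the It\^o differential $\mathrm{d}(U^\dagger\ket\psi)$ picks up an extra term $\dot U^\dagger U\ket\chi\,\mathrm{d}t$, so strictly speaking the frames coincide only instant by instant. I would handle this in the way the proposition is phrased: the measurement acts at time $t$, and the frame is frozen at $t$ for an instantaneous (or fast, $\kappa\gg\omega$) measurement. The leftover generator is exactly the one that produced the horizontal lift in \cref{lemma:instant_code_state}, and it is absorbed into the definition of $E(t)$ and of the subsequent evolution. So I would state the equivalence per measurement step, invoking the Zeno-regime separation of timescales, rather than claim it for the whole trajectory.
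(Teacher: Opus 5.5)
Your proof is correct and is essentially the paper's argument. Both replace $V(t)$ by $U(t)=V(t)e^{-i\bar\theta(t)H}$ in the rotated syndrome projectors, because $H$ commutes with the unrotated stabilizers; the paper phrases this as $E$ commuting or anticommuting with $H$, so that $H$ commutes with $E\mathbb{P}_0E$. Both then cancel $U^\dagger(t)U(t)$ against $\ket{\tilde\psi_E(t)}=U(t)E(t)\ket{\bar\psi}$ and use unitarity to match the outcome probabilities.

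Your extension to weak-measurement Kraus operators, and your restriction of the claim to a single measurement instant, are consistent with the paper's projective-measurement setting. The closing remark that the frame-change generator is ``absorbed into $E(t)$'' is loosely stated and not needed for this proposition.
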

\begin{proof}
    A projective measurement of $\ket{\tilde{\psi}_E(t)}$ in the rotated stabilizer basis yields a state in either $\mathcal{H}_E(t)$ or $\mathcal{H}_{EX}(t)$. Suppose the outcome corresponds to the former; the post-measurement state is
    \[
        \ket{\psi_E(t)} = \frac{\mathbb{P}_E(t)\ket{\tilde{\psi}_E(t)}}{\sqrt{p_E(t)}},
    \]
    where $p_E(t)$ is the associated probability that follows
    \[
    \begin{split}
        p_E(t) &= \bra{\bar{\psi}}U^\dagger(t)E\mathbb{P}_E(t)EU(t)\ket{\bar{\psi}} \\
               &= \bra{\bar{\psi}}U^\dagger(t)EV(t)E\mathbb{P}_0EV^\dagger(t)\ket{\bar{\psi}} \\
               &= \bra{\bar{\psi}}E^\dagger(t)(E\mathbb{P}_0E)E(t)\ket{\bar{\psi}},
    \end{split}
    \]
    where we used the fact that $E$ either commutes or anticommutes with $H$, and thus replacing $V(t)$ with $U(t)$. Note that the last line is precisely the probability of obtaining the syndrome $\s_E$ when $E(t)\ket{\bar\psi}$ is measured in the original stabilizer basis. Moreover, we can rewrite $\ket{\psi_E(t)}$ as
    \[
        \ket{\psi_E(t)} = U(t)\left(\frac{\mathbb{P}_E E(t)\ket{\psi}}{\sqrt{p_E(t)}}\right).
    \]
    A similar argument holds when the state is projected to $\mathcal{H}_{EX}(t)$.
\end{proof}

Therefore, to find the instantaneous error state, we measure $E(t)\ket{\bar\psi}$ with measurement operators $\{\mathbb{P}_E, \mathbb{P}_{EX}\}$, and rotate the post-measurement state by $U(t)$. Depending on whether the error operator $E$ commutes or anticommutes with $H$ and $X$, we have four cases, each associated with different probabilities of projection into the subspaces $\mathcal{H}_E(0)$ and $\mathcal{H}_{EX}(0)$. These cases correspond to different erroneous logical rotations, denoted by $\beta_E(t)$ and $\beta_{EX}(t)$, respectively. The resulting instantaneous error states are given by
\begin{subequations}\label{eq:inst_err_st}
\begin{align}
\ket{\psi_E(t)} &= U(t) \exp\{i \beta_E(t) H\} E\ket{\bar{\psi}} \\
\ket{\psi_{EX}(t)} &= U(t) \exp\{i \beta_{EX}(t) H\} EX\ket{\bar{\psi}}.
\end{align}
\end{subequations}
We demonstrate the proof for a single case, as the same approach applies to the remaining ones. Assume that $\{E, H\} = [E, X] = 0$. The rotated error $E(t)$ simplifies to
\[
\begin{split}
    E(t) = &\cos(2\omega_H t)e^{i2\bar{\theta}(t)H}E \\
    &+ \sin(2\omega_H t)\sin(2\omega t)EXH \\
    &+ i\sin(2\omega_H t)\cos(2\omega t)e^{i2\bar{\theta}(t)H}EH.
\end{split}
\]
The probability of measuring $E(t)\ket{\bar\psi}$ in $\mathcal{H}_E(0)$ is
\[
\begin{split}
    p_E(t) &= \bra{\bar\psi}E^\dagger(t)(E\mathbb{P}_0 E)E(t)\ket{\bar\psi} \\
    &= 1 - \sin^2(2\omega_H t)\sin^2(2\omega t),
\end{split}
\]
where we used the fact that $\bra{\bar\psi}X\ket{\bar\psi} = \bra{\bar\psi}HX\ket{\bar\psi} = 0$. The state is projected into
\[
\begin{split}
    \ket{\psi_E(t)} &= \frac{U(t)}{\sqrt{p_E(t)}} \left(\cos(2\omega_H t)I - i\sin(2\omega_H t)\cos(2\omega t)H\right)\\
    &\quad\times E\exp(-i2\bar{\theta}(t)H)\ket{\bar\psi}\\
    &= U(t)\exp(i\beta_E(t) H)E\ket{\bar\psi},
\end{split}
\]
where $\beta_E(t) = 2\bar{\theta}(t) - \tan^{-1}\big[\tan(2\omega_H t)\cos(2\omega t)\big]$. This gives the instantaneous error state in $\mathcal{H}_E(t)$. The probabilities and the erroneous rotation angles for all the remaining cases can be calculated analogously. They are listed in \cref{tab:instant_err_states}.
\begin{table*}[ht]
    \centering
    \begin{tabular}{@{}c@{\hskip 2em}c@{\hskip 2em}c@{\hskip 2em}c@{}}
    \toprule
    \textbf{Case} & $\boldsymbol{p_{EX}(t)}$ & $\boldsymbol{\beta_E(t)}$ & $\boldsymbol{\beta_{EX}(t)}$ \\
    \midrule
    $[E, H] = [E, X] = 0$ & 
    $0$  & 
    $0$ & 
    N/A \\
    
    \addlinespace\addlinespace
    
    $[E, H] = \{E, X\} = 0$ & 
    $\sin^2(2\omega t)$ & 
    $0$ & 
    $2\bar{\theta}(t)$ \\
    
    \addlinespace\addlinespace

    $\{E, H\} = [E, X] = 0$ & 
    $\sin^2(2\omega t)\sin^2(2\omega_H t)$ & 
    $2\bar{\theta}(t) - \tan^{-1}\big[\tan(2\omega_H t)\cos(2\omega t)\big]$ & 
    $\pi/2$ \\
    
    \addlinespace\addlinespace

    $\{E, H\} = \{E, X\} = 0$ & 
    $\cos^2(2\omega t)\sin^2(2\omega_H t)$ & 
    $2\bar{\theta}(t) - \tan^{-1}\big[\tan(2\omega_H t)\cos(2\omega t)\big]$ & 
    $0$ \\
    \bottomrule
    \end{tabular}
    \caption{Depending on whether the error operator $E$ commutes or anticommutes with $H$ and $X$, the measurements project $\ket{\tilde{\psi}_E(t)}$ onto $\mathcal{H}_{E}$ or $\mathcal{H}_{EX}$ with probabilities $p_{E}(t)$ or $p_{EX}(t)$, and erroneous logical rotation angles $\beta_E(t)$ or $\beta_{EX}(t)$, respectively.}
    \label{tab:instant_err_states}
\end{table*}

\subsection{Measurement-induced error}\label{sec:meas_error}

During the holonomic evolution, if the stabilizer generators are not rotated slowly enough, the state can erroneously jump to the rotated error space. Assuming that the state has not also been corrupted by environmental noise, the projector onto the measurement-induced error space is
\[
    \mathbb{P}_X(\tau) = V(\tau)X\mathbb{P}_0XV^\dagger(\tau).
\]
We can find the instantaneous error state by projecting the instantaneous code state $\ket{\bar\psi(\tau)}$ onto $\mathcal{H}_X(\tau^+)$ and normalizing it:
\[
\begin{split}
    \ket{\psi_X(\tau)} &= \lim_{h \rightarrow0} \frac{\mathbb{P}_X(\tau+h)\ket{\bar{\psi}(\tau)}}{\sqrt{\bra{\bar\psi(\tau)}\mathbb{P}_X(\tau+h)\ket{\bar\psi(\tau)}}} \\
    &= U(\tau)\exp(i\beta_X(\tau) H) X \ket{\bar\psi},
\end{split}
\]
where the erroneous logical rotation angle $\beta_X(\tau)$ is given by
\[
    \beta_X(\tau) = 2\bar{\theta}(\tau)+\tan^{-1}\big[2\bar{\theta}(\tau)\big].
\]

In a real system, the instantaneous error state can be determined by decoding the syndrome to identify $E$, and then checking whether it commutes or anticommutes with $H$ and $X$. We can obtain the syndrome by tracking the expectation values of the rotated stabilizers by their measurement currents.

\subsection{Syndrome tracking}\label{sec:syndrome_detection}

Suppose that the system is subject to a noise process whose dynamics are described by a traceless superoperator $\mathcal{L}_N$. At the same time we continuously measure the rotated stabilizer generators. The combined system dynamics are described by the stochastic master equation
\[
    \mathrm{d}\rho = \mathcal{L}_N[\rho] \mathrm{d}t + \kappa\sum_i \mathcal{D}[g_i]\rho \mathrm{d}t + \sqrt{\kappa}\sum_i \mathcal{H}[g_i]\rho \mathrm{d}W_t^{(i)},
\]
and the output measurement current associated with a stabilizer $g$ is
\begin{equation}\label{eq:meas_current}
    Q(t) = \braket{g(t)}_{\rho(t)} + \frac{1}{2\sqrt{\kappa}}\frac{\mathrm{d}W_t}{\mathrm{d}t}.
\end{equation}
Since $g(t)$ is a Pauli operator rotated by a unitary matrix, its eigenvalues $\pm1$ are preserved, so $-1 \le \braket{g(t)}_{\rho(t)} \le 1$. Starting from a code state, as long as there are no errors this value remains constant at $+1$: the density matrix lies entirely in the code space, with no component in any of the error spaces. When a correctable error happens at some time, the components of one \emph{or more} of the error spaces will increase. This means that the state is no longer strictly an eigenstate of the rotated stabilizers. As we continue measuring the stabilizer generators, this will annihilate any superposition between multiple syndrome spaces, and relaxes the state on to a \emph{single} syndrome space. The signature of this is a flip in the expectation value of one or more of the stabilizer generators, depending on whether the error commutes or anticommutes with the generator. In practice, however, we do not have direct access to the expectation values. We have to infer them from the noisy measurement current. Hence, one must filter the current to estimate the instantaneous error syndrome. In this work, we do this by simulating the stochastic master equation.

We estimate the density matrix $\hat{\rho}(t)$ using the measurement current. Rearranging the terms in \cref{eq:meas_current} provides the Wiener increment $\mathrm{d}W_t$. We can then write the dynamics of the \emph{estimator} density matrix as
\[
\begin{split}
    \mathrm{d}\hat{\rho} &= \mathcal{L}_N[\hat\rho] \mathrm{d}t + \kappa\sum_j \mathcal{D}[g_j(t)]\hat\rho \mathrm{d}t + \\
    &\qquad 2\kappa\sum_j \mathcal{H}[g_j(t)]\hat\rho (Q_j(t) - \braket{g_j(t)}_{\hat{\rho}(t)})\mathrm{d}t.
\end{split}
\]
Integrating the master equation in real time provides the estimate of the density matrix, and the syndrome is calculated as \(\{\braket{g_j(t)}_{\hat\rho(t)}\}\). Typically, the initial state $\hat\rho(0)$ is unknown. However, \cite{ahn_continuous_2002} shows that the syndrome does not depend on where the initial condition is within the code space, so the estimator can assume a maximally mixed code state:
\[
    \hat\rho(0) = \frac{1}{2^k}\sum_{j=0}^{2^k-1} \ket{\bar j}\bra{\bar j},
\]
where the $j$ inside the ket and the bra is in binary notation.

It is apparent that simulating the evolution of the estimator requires a computational overhead. As the system size grows, the exponential increase in the matrix dimension makes this method impractical. However, this is not the only way to perform filtering. For instance, one could take a moving average, successively through time, of the measurement current. When the moving window is weighted with an exponential decay, the averaged current can be calculated using the Ornstein-Uhlenbeck process, which is a Markovian differential equation. Hence, the filtering can be performed in real time \cite{chen_brun}. However, time-averaging can display spurious ``jumps''; detecting them requires a heuristic aspect of the system. One could also use a Bayesian filtering technique that takes into account the influence of both measurements and noise \cite{lanka_sps, Convy_2022}. More recently, continuous-time quantum error correction was experimentally demonstrated using recurrent neural networks to filter the measurement current \cite{Convy2022logarithmicbayesian}. \cite{Mohseninia_2020, clg_2008} provides other such techniques.

\subsection{Path design}\label{sec:path_design}

As shown in \cref{tab:instant_err_states}, for a given error $E$ the system can occupy up to five distinct instantaneous error states. Based on the nature of the erroneous rotation angle, we categorize these into three classes:
\begin{itemize}
\item Class 0--- No logical error: The state lies in $\mathcal{H}_E(t)$ with $[E, H] = 0$, or in $\mathcal{H}_{EX}(t)$ with $\{E, H\} = \{E, X\} = 0$.
\item Class 1--- Pauli logical error: The state lies in $\mathcal{H}_{EX}(t)$ with $\{E, H\} = 0$ and $[E, X] = 0$.
\item Class 2--- Analog logical error: Encompasses all remaining cases, including those induced by the measurement error.
\end{itemize}

Recall that when the connection $A(t) = 0$, the vertical component of the lift vanishes, making it purely horizontal; then the lift depends solely on the evolution path. This provides us with a tool to appropriately modify the evolution path and achieve the desired \emph{emulated} holonomy in the error space. Since both the correctable errors and the measurement-induced error are Pauli operators, we can track the Pauli frame and continue the computation within the error space. Consider a well-known result in differential geometry:
\begin{proposition}\label{prop:unique_lift}
    Suppose $\gamma:[0, 1] \rightarrow \mathcal{B}$ is a curve in $\mathcal{B}$, and let $L_0 \in \pi^{-1}(\mathbb{P}_0)$. Then there exists a unique horizontal lift in $\mathcal{F}$ such that $L(0) = L_0$.
\end{proposition}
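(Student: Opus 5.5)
The plan is to reduce the existence and uniqueness of the horizontal lift to the existence and uniqueness of solutions of a linear ordinary differential equation on $U(K)$, exactly as was done for the rotation paths in \cref{eq:lift_deq}.

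\emph{Step 1: an arbitrary lift.} Assume $\gamma$ is (piecewise) smooth, as in all paths considered in this paper. Pick any smooth curve $\tilde L(t)\in\mathcal{F}$ with $\pi(\tilde L(t))=\gamma(t)$ and $\tilde L(0)=L_0$. In our setting this is explicit: if $\gamma(t)=V(t)\mathbb{P}_0V^\dagger(t)$ with $V(0)=I$, take $\tilde L(t)=V(t)L_0$. For a general smooth curve in the Grassmannian, such a $V(t)$ always exists. One way to see this is to solve $\dot V=[\dot\gamma,\gamma]V$ with $V(0)=I$. The generator $[\dot\gamma,\gamma]$ is anti-Hermitian, so $V(t)$ is unitary. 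Using $\gamma\dot\gamma\gamma=0$ (which follows from $\gamma^2=\gamma$), one checks that $V^\dagger\gamma V$ is constant.

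\emph{Step 2: reduce to a gauge ODE.} By the fiber structure, every lift of $\gamma$ starting at $L_0$ has the form $L(t)=\tilde L(t)h(t)$ with $h(t)\in U(K)$ and $h(0)=I_K$. The reason is that two orthonormal bases of the same subspace differ by $h=\tilde L^\dagger L$, which is unitary. Using the gauge transformation law of the connection \cref{eq:connection}, the horizontality condition \cref{eq:horizontal_lift} becomes
\begin{equation}
\dot h(t)=-\tilde A(t)\,h(t),\qquad \tilde A(t)=\tilde L^\dagger(t)\dot{\tilde L}(t),\qquad h(0)=I_K .
\end{equation}
This is a linear ODE with continuous coefficients. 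By Picard--Lindelöf it has a unique global solution on $[0,1]$, given by the path-ordered exponential $h(t)=\mathcal{P}\exp\!\left(-\int_0^t\tilde A\right)$. On piecewise-smooth paths one concatenates the solutions segment by segment.

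\emph{Step 3: the solution stays in $U(K)$.} This is the only point needing care. Differentiating $\tilde L^\dagger\tilde L=I_K$ gives $\tilde A^\dagger=-\tilde A$, so $\tilde A$ is anti-Hermitian. Then
\begin{equation}
\frac{\mathrm{d}}{\mathrm{d}t}(h^\dagger h)=h^\dagger(-\tilde A^\dagger-\tilde A)h=0,
\end{equation}
so $h(t)$ is unitary. Hence $L=\tilde L h$ lies in $\mathcal{F}$, projects to $\gamma$, and satisfies $L(0)=L_0$. This proves existence.

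\emph{Step 4: uniqueness.} Suppose $L'$ is another horizontal lift with $L'(0)=L_0$. Write $L'=\tilde L h'$ as in Step 2. Then $h'$ solves the same initial value problem, so the ODE uniqueness gives $h'=h$, and therefore $L'=L$.

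Equivalently, uniqueness can be shown directly. If $L$ and $L'$ are both horizontal and $L'=Lg$ with $g(t)\in U(K)$, then $0=L'^\dagger\dot L'=g^\dagger\dot g$, so $g$ is constant and equal to $g(0)=I_K$.
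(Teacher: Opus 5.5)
Your proof is correct and follows the same route as the paper. You fix the particular lift $\tilde L(t)=V(t)L_0$, reduce horizontality of $L=\tilde L h$ to the linear equation $\dot h=-\tilde A h$ with anti-Hermitian $\tilde A$, and conclude that the unique solution $h=\mathcal{P}\exp(-\int\tilde A)$ is unitary. You also fill in points the paper leaves implicit: the existence of $V(t)$ for a general smooth curve via the generator $[\dot\gamma,\gamma]$, the explicit unitarity check, and the fact that every lift has the form $\tilde L h$, which is what turns ODE uniqueness into uniqueness of the lift.
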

\begin{proof}
    Since the right-hand side in \cref{eq:lift_deq} is skew-Hermitian, $h(t) \in U(K) \; \forall t$. Suppose $\tilde{L}(t) = V(t)L(0)$ is a particular curve in $\mathcal{F}$ with associated connection $\tilde{A} = \tilde{L}^\dagger \mathrm{d}\tilde{L}$. With the initial condition $h(0) = I_K$, the solution to \cref{eq:lift_deq} is 
    \[
        h(t) = \mathcal{P}\exp\left(-\int\tilde{A}\right),
    \]
    and hence there exists a unique horizontal lift.
\end{proof}
\begin{figure}
    \centering
    \includegraphics[width=\linewidth]{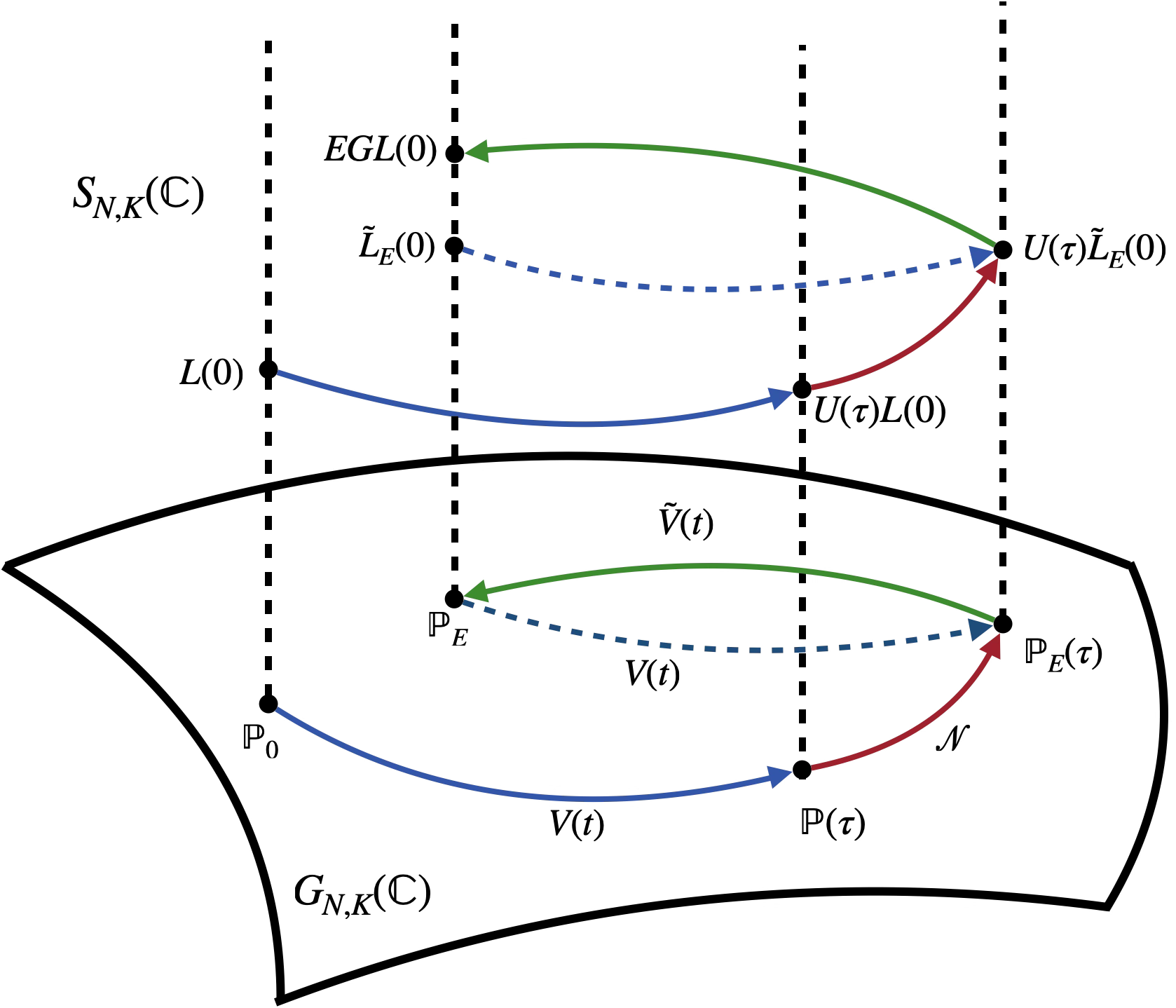}
    \caption{Real-time error correction by dynamic path modulation. Here, the black dashed lines depict the fibers over the respective subspaces. The blue solid curves depict the actual evolution due to the path $V(t)$, and the dashed blue curves depict an \emph{emulated} evolution. Due to \cref{prop:unique_lift}, the point on $\mathcal{F}$ acquires the same unitary transformation $U(t)$ for the evolution path $V(t)$. Assuming that $U(\tau)\tilde{L}_E(0)$ is the instantaneous error state provides $\tilde{L}_E(0)$, and the path can be modified to $\tilde{V}(t)$; at the end of the evolution with $T=2\pi/\omega$, the emulated holonomy is $G$. Note that the figure depicts a scenario where the instantaneous error state belongs to $\mathcal{H}_E(\tau)$. Analogously, the system can occupy $\mathcal{H}_{EX}(\tau)$ or $\mathcal{H}_X(\tau)$.}
    \label{fig:unique_lift}
\end{figure}

An immediate consequence of \cref{prop:unique_lift} is that for a given evolution path, a point on the fiber over \emph{any} syndrome space acquires the same lift. In particular, for the evolution path $V(t)$, we have
\[
\label{eq:rot_basis_gen}
    \tilde{L}_E(t) = U(t)\tilde{L}_E(0),
\]
for some point $\tilde{L}_E(0) \in \mathcal{F}$ over $\mathcal{H}_E(0)$. Suppose an error occurred at time $\tau$, i.e., a noise process $\mathcal{N}$ made the state jump from the rotated code space $\mathbb{P}(\tau)$ to the rotated error space $\mathbb{P}_E(\tau)$. From \cref{eq:inst_err_st}, we can include the effect of the error to obtain the basis for $\mathcal{H}_E(\tau)$:
\[
\label{eq:rot_basis_exact}
    \tilde{L}_E(\tau) = U(\tau)e^{i\beta_E(\tau)H}EL(0).
\]
Comparing \cref{eq:rot_basis_gen} and \cref{eq:rot_basis_exact}, we have
\[
    \tilde{L}_E(0) = Ee^{-i\beta(\tau)H}L(0).
\]
This is illustrated in \cref{fig:unique_lift}. Note that the same argument holds for $\mathcal{H}_{EX}$ and $\mathcal{H}_X$ as well. Consider the following ansatz for a modified evolution path:
\[
\label{eq:mod_path}
     \tilde{V}(t) = e^{i\tilde{\omega}_H (t-\tau)H}V(t),
\]
where $\tilde{\omega}_H = \tilde{\theta}\omega/(2\pi)$ corresponds to a correction angle $\tilde{\theta}$. Depending on the class of the error, we have the following theorem for this angle:
\begin{theorem}
    Suppose that during a holonomic procedure with path $V(t)$, a Markovian error $E$ occurs at time $\tau$. By modifying the path to $\tilde{V}(t)$, with
    \[
        \tilde{\theta} = 
        \begin{cases}
            0, & \text{for class 0}, \\[1.5ex]
            \displaystyle\frac{\pi/2}{1 - \frac{\tau}{T} \left[1 - \sinc(2\omega\tau)\right]}, & \text{for class 1}, \\[2ex]
            - \displaystyle\frac{\beta_E(\tau) + 2\theta}{1 - \frac{\tau}{T} \left[1 - \sinc(2\omega\tau)\right]}, & \text{for class 2},
        \end{cases}
    \]
    and continuing the evolution until $T = 2\pi/\omega$, one achieves $L_E(T) = EGL(0)$.
\end{theorem}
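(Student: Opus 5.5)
The plan is to compute the horizontal lift along the modified path $\tilde V(t)$ for $t\in[\tau,T]$, starting from the post-jump basis $\tilde L_E(\tau)=U(\tau)e^{i\beta_E(\tau)H}EL(0)$ of \cref{eq:rot_basis_exact}. Then I will choose $\tilde\theta$ so that at $t=T$ the accumulated logical phase cancels the erroneous one and leaves exactly $G$.

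\textbf{Lift along the modified path.} Write $\tilde V(t)=e^{i\tilde\omega_H(t-\tau)H}e^{i\omega_H tH}e^{i\omega tX}$. Since $H$ commutes with itself, this is again a path of the form \cref{eq:rotation_unitary} with time-dependent $H$-rate. The connection $\tilde V^\dagger \dot{\tilde V}$ restricted to the relevant syndrome space is then computed exactly as in the proof of \cref{lemma:instant_code_state}. It contains an $i\omega X$ piece, which is off-diagonal and drops out under compression by the projector because $X$ anticommutes with some stabilizer. It also contains an $i(\omega_H+\tilde\omega_H)e^{-i\omega tX}He^{i\omega tX}$ piece. Using $\{X,H\}=0$, this last piece equals $i(\omega_H+\tilde\omega_H)(\cos 2\omega t\,H - i\sin 2\omega t\, XH)$, and only the $\cos 2\omega t\, H$ part survives compression.

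The compressed connection therefore commutes with itself at all times. Solving \cref{eq:lift_deq} from $\tau$ to $T$ gives the fiber rotation
\[
h=\exp\Big(-i\int_\tau^T(\omega_H+\tilde\omega_H\,\mathbb{1}_{t>\tau})\cos(2\omega t)\,dt\,H\Big),
\]
conjugated appropriately by $E$, which only flips the sign when $\{E,H\}=0$. By \cref{prop:unique_lift} this lift is unique, so the physical state follows it.

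\textbf{Assembling the holonomy.} At $T$ we have $\tilde V(T)=e^{i\tilde\omega_H(T-\tau)H}e^{i\theta H}$. The total logical angle is therefore $\theta+\tilde\theta(1-\tau/T)$, minus the integral of the cosine term. That integral is
\[
\frac{\tilde\theta\omega}{2\pi}\int_\tau^T\cos 2\omega t\,dt=-\frac{\tilde\theta}{4\pi}\sin 2\omega\tau=-\tilde\theta\,\frac{\tau}{T}\,\sinc(2\omega\tau),
\]
using $T=2\pi/\omega$. Thus $\tilde\theta$ enters only through the effective factor $\tilde\theta\,[1-\tfrac{\tau}{T}(1-\sinc(2\omega\tau))]$, which explains the common denominator. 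The unmodified part contributes $\theta$ together with the $\bar\theta$ terms, and these combine with $\beta_E(\tau)$ as recorded in \cref{tab:instant_err_states}.

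\textbf{Solving for $\tilde\theta$ by class.} Equating the total to the target $L_E(T)=EGL(0)$ gives a linear equation in $\tilde\theta$, which I solve class by class.
\begin{itemize}
\item For class 0 the erroneous angle vanishes, or is absorbed because $E$ commutes past $G$ as $EG$. Hence $\tilde\theta=0$.
\item For class 1 the residual is the Pauli logical $e^{i\pi/2 H}$. Requiring the effective factor to supply $\pi/2$ gives the stated formula.
\item For class 2 the residual is $-(\beta_E(\tau)+2\theta)$. Here $E$ anticommuting with $H$ turns the remaining forward rotation $\theta$ into $-\theta$ relative to $G$, contributing the extra $2\theta$.
\end{itemize}

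\textbf{Main obstacle.} The delicate step is the careful bookkeeping of signs when commuting $E$ or $EX$ through $e^{i\phi H}$ and $e^{i\omega tX}$ in each case. This bookkeeping is what produces the $2\theta$ and the sign of $\beta$. I would fix conventions by redoing the single case $\{E,H\}=[E,X]=0$ explicitly and then arguing that the remaining cases follow by the same substitutions.
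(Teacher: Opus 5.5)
Your proposal is correct and follows essentially the paper's own route. You compress $\tilde V^\dagger\dot{\tilde V}$ using $\{X,H\}=0$ and $\mathbf{s}_X\neq\mathbf{0}$ so that only the $\cos(2\omega t)H$ term survives, integrate from $\tau$ to $T$, multiply by $\tilde V(T)=e^{i[\theta+\tilde\theta(1-\tau/T)]H}$, and solve the resulting linear equation for $\tilde\theta$, with the extra $2\theta$ arising when the relevant error operator anticommutes with $H$.

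When you write it out, state explicitly that the $e^{-i\bar\theta(\tau)H}$ carried by the initial fiber element $h_E(\tau)$ exactly cancels the $\theta$-part of the integral. The net angle in front of the error operator is then $\theta+\tilde\theta\,[1-\tfrac{\tau}{T}(1-\sinc(2\omega\tau))]+\beta$, rather than something that ``combines with'' $\beta$. Also note that your class-1 choice of $+\pi/2$ cancels the residual only up to the global phase $e^{i\pi H}=-I$, exactly as in the paper.
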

\begin{proof}
Suppose that the state after the jump is projected to $\mathcal{H}_E(\tau)$. From \cref{eq:general_stiefel_curve}, the point on $\mathcal{F}$ just after the jump at $\tau$ can be written as
\[
    \tilde{L}_E(\tau) = V(\tau)\tilde{L}_E(0)h_E(\tau),
\]
making the lift on $\tilde{L}_E(0)$ due to the application of $V(t)$
\[
    \begin{split}
        h_E(\tau) &= \tilde{L}^\dagger_E(0)V^\dagger(\tau)\tilde{L}_E(\tau) \\
        &= \tilde{L}^\dagger_E(0)V^\dagger(\tau)U(\tau)\tilde{L}_E(0),
    \end{split}
\]
where the second line is a consequence of \cref{prop:unique_lift}.

At this stage, suppose that the path is changed to $\tilde{V}(t)$. We apply \cref{eq:lift_deq} to find the effective lift $\tilde{h}_E(t)$ on $\tilde{L}_E(0)$:
\[
\label{eq:lift_deq_vt}
    \frac{\mathrm{d}\tilde{h}_E}{\mathrm{d}t} = -\tilde{L}_E^\dagger(0)\tilde{V}^\dagger(t)\frac{\mathrm{d}\tilde{V}(t)}{\mathrm{d}t}\tilde{L}_E(0)\tilde{h}_E(t).
\]
To ensure that the evolution has a smooth transition at $\tau$, we require that
\begin{subequations}
    \begin{align}
        \tilde{V}(\tau) &= V(\tau) \\
        \tilde{h}_E(\tau) &= h_E(\tau).
    \end{align}
\end{subequations}
From \cref{eq:mod_path}, we find that
\[
\label{eq:dvt_dt}
    \tilde{V}^\dagger(t)\frac{\mathrm{d}\tilde{V}(t)}{\mathrm{d}t} = i\omega\left[\left(\frac{\theta + \tilde{\theta}}{2\pi}\right)He^{i2\omega t X} + X\right].
\]
Substituting \cref{eq:dvt_dt} in \cref{eq:lift_deq_vt}, and using the fact that $\vec{s}_{X} \neq \vec0$, we obtain
\[
    \frac{\mathrm{d}\tilde{h}_E}{\mathrm{d}t} = -i \left(\frac{\theta + \tilde{\theta}}{2\pi}\right)\omega \cos(2\omega t) \tilde{L}_E^\dagger(0)H\tilde{L}_E(0)\tilde{h}_E(t).
\]
Integrating the above equation from $\tau$ to the final time $T=2\pi/\omega$ and using $\tilde{L}_E^\dagger(0)\tilde{L}_E(0)=I_K$, we get
\[
\begin{split}
    \tilde{h}_E(2\pi/\omega) =& \tilde{L}_E^\dagger(0)\exp\left[i \left(\frac{\theta + \tilde{\theta}}{2\pi}\right)\sin(2\omega\tau)H\right]\\
    &\times \tilde{L}_E(0)h_E(\tau).
\end{split}
\]
The final point on $\mathcal{F}$ is
\[
\label{eq:final_pt_E}
    \tilde{L}_E(2\pi/\omega) = \tilde{V}(2\pi/\omega)\tilde{L}_E(0)\tilde{h}_E(2\pi/\omega).
\]
Depending on whether $E$ commutes or anticommutes with $H$, we have two cases ($X$ does not matter here). Suppose first that $\{E, H\} = 0$; the final point is
\[
\begin{split}
    \tilde{L}_E(\tau) = E\exp\Biggl[&i\biggl(\frac{\tilde{\theta}}{2\pi}\omega\tau - (\theta + \tilde{\theta}) \\
    & - \frac{\tilde{\theta}}{4\pi}\sin(2\omega\tau)-\beta_E(\tau)\biggr)H\Biggr]L(0).
\end{split}
\]
Comparing this to the desired final point $\tilde{L}_E(\tau) = E e^{i\theta H}L(0)$, we have
\[
    \tilde{\theta} = - \frac{\beta_E(\tau)+2\theta}{1 - \frac{\tau}{T}(1-\sinc(2\omega\tau))}.
\]
By contrast, if $[E, H] = 0$,
\[
    \tilde{\theta} = \frac{\beta_E(\tau)}{1 - \frac{\tau}{T}(1-\sinc(2\omega\tau))}.
\]
From \cref{tab:instant_err_states}, $\beta_E(\tau) = 0$ for this case, giving $\tilde{\theta}=0$. This, in turn, gives $\tilde{V}(t) = V(t)$, i.e., there is no need to change the path.

Analogously, if the instantaneous error state was projected to $\mathcal{H}_{EX}(\tau)$, the final point on $\mathcal{F}$ would be
\[
\label{eq:final_pt_EX}
    \tilde{L}_{EX}(2\pi/\omega) = \tilde{V}(2\pi/\omega)\tilde{L}_{EX}(0)\tilde{h}_{EX}(2\pi/\omega).
\]
Depending on whether $E$ commutes or anticommutes with $H$ and $X$, we have four cases. By a similar approach, the new rotation angles are calculated and summarized in \cref{app:new_rot_angles}.
    
\end{proof}

Let us illustrate this approach with a simple example. Consider the $[\![3, 1, 3]\!]$ bit-flip code that can correct single-qubit Markovian bit-flips, i.e., $\mathcal{E} = \{\sigma^0,\sigma^x_1, \sigma^x_2, \sigma^x_3\}$, where $\sigma^0$ is the identity matrix (i.e., no error). The code is required to satisfy the conditions outlined in \cref{lemma:qec_conditions} to be able to perform error correction. A simple numerical search reveals that there is no $X$ that satisfies the required conditions. So, we append two ancilla qubits, initialized in $\ket{0}$, to the code. The updated stabilizer generators are
\begin{subequations}
    \begin{align}
        g_1 &= \sigma^z_1 \sigma^z_2, \\
        g_2 &= \sigma^z_2 \sigma^z_3, \\
        g_3 &= \sigma^z_4 \\
        g_4 &= \sigma^z_5.
    \end{align}
\end{subequations}
We would like to apply a $\bar{\sigma}^z$ rotation with $\theta=\pi/6$. The logical operator $H$ and a suitable rotation operator $X$ are
\begin{subequations}
\begin{align}
    H &= \sigma^z_1 \sigma^z_2 \sigma^z_3  \\
    X &= \sigma^x_1 \sigma^z_3 \sigma^x_4 \sigma^x_5.
\end{align}
\end{subequations}
Note that the adapted code can also correct arbitrary errors on the ancilla qubits since they are initialized in stabilizer states. Errors can result from two sources: Markovian environmental errors on the qubits, or an error due to the non-adiabatic rotation of the code space. The associated rotated error spaces are $\mathcal{H}_E(\tau)$ and $\mathcal{H}_{EX}(\tau)$  $\forall E \in \mathcal{E}$ when an error made the state jump at $\tau$. Since $\vec{s}_{E} \neq \vec{s}_{EX}$, these error spaces can be distinguished by measuring the stabilizer generators. A look-up table to decode the errors and associated path changes is given in \cref{tab:decoding_table}.

\begin{table}[]
    \centering
    \begin{tabular}{@{}c@{\hskip 2em}c@{\hskip 2em}c@{\hskip 2em}c@{}}
    \toprule
    \textbf{Syndrome} & \textbf{Error} & \textbf{Error class} \\
    \midrule
    $0000$ & 
    I &
    0 \\
    
    \addlinespace\addlinespace
    
    $1000$ & 
    $\sigma^x_1$ &
    2 \\
    
    \addlinespace\addlinespace

    $1100$ & 
    $\sigma^x_2$ &
    2 \\
    
    \addlinespace\addlinespace

    $0100$ & 
    $\sigma^x_3$ &
    2 \\

    \addlinespace\addlinespace

     $0010$ & 
    $\sigma^x_4$ &
    $0$
     \\

    \addlinespace\addlinespace

    $0001$ & 
    $\sigma^x_5$ &
    $0$
     \\

    \addlinespace\addlinespace

    $0011$ & 
    $X\sigma^x_1$ &
    1 \\

    \addlinespace\addlinespace

    $0111$ & 
    $X\sigma^x_2$ &
    1 \\

    \addlinespace\addlinespace

    $1111$ & 
    $X\sigma^x_3$ &
    0 \\

    \addlinespace\addlinespace

    $1001$ & 
    $X\sigma^x_4$ &
    1
     \\

    \addlinespace\addlinespace

    $1010$& 
    $X\sigma^x_5$ &
    1
     \\

    \addlinespace\addlinespace

    $1011$ & 
    $X$ &
    2 \\
    \bottomrule
    \end{tabular}
    \caption{Path decoding table for the $[\![3, 1, 3]\!]$ code correcting single-qubit Markovian bit-flip errors at $\tau$.}
    \label{tab:decoding_table}
\end{table}

We assume a simple noise model: Markovian single-qubit bit-flip errors occurring at a uniform rate $\gamma$. Combining these errors with the continuous measurements of the rotated stabilizer generators, the system dynamics are modeled by the stochastic master equation
\[
\begin{split}
    \mathrm{d}\rho = &\gamma\sum_{j} \mathcal{D}[\sigma^x_j]\rho \mathrm{d}t + \kappa \sum_{j} \mathcal{D}[g_j(t)]\rho \mathrm{d}t + \\
    &\sqrt{\kappa} \sum_{j} \mathcal{H}[g_j(t)]\rho \mathrm{d}W_t^{j}.
\end{split}
\label{eq:sme_313}
\]
\cref{fig:Markovian_trajs} shows 3 sample trajectories of the system undergoing the holonomic evolution with the path $V(t)$ (i.e., no path change) under the influence of environmental noise. Specifically, \cref{fig:no_error_traj} shows the evolution of the expected values of the rotated stabilizers when there is no error. This can be identified by the signature of all the expected values of the stabilizers at $+1$. The units of time are set by $\omega^{-1}$. Since there is no jump, the state at the end of the evolution should have acquired the desired holonomy in the original code space. 

Next, we highlight the contrast between measurement and environmental errors. We first consider the case where the errors arise solely from measurement imperfections, i.e., we set $\gamma = 0$ and $\omega/\kappa = \mathcal{O}(10^{-1})$. This results in a measurement error with a probability of $\approx 36\%$. As illustrated in \cref{fig:X_error_traj}, at time $t \approx 3$, we observe a sudden transition in the expectation values of stabilizers 1, 3, and 4, from $+1$ to $-1$, corresponding to a syndrome value of $1011$. Referring to \cref{tab:decoding_table}, this is consistent with an $X$ error, as expected from a non-adiabatic rotation of the code space. To correct for this error, the path is then changed from $V(t)$ to $\tilde{V}(t)$, where $\tilde{\theta}$ corresponds to a class 2 error. Assuming there are no further jumps, the desired logical gate is applied in the error space $\mathcal{H}_X$. The final state would be $\ket{\psi(T)} = XG\ket{\bar{\psi}}$.

Finally, \cref{fig:env_error_traj} shows a different trajectory with parameters $\gamma/\kappa = \mathcal{O}(10^{-4})$ and $\omega/\kappa = \mathcal{O}(10^{-2})$. This results in an $X$ error with an approximate probability of $6\%$. In this parameter regime, jumps are dominated by environmental noise. In the figure, stabilizers 1 and 2 are flipped, while stabilizers 3 and 4 remain unchanged. This corresponds to a $\sigma^x_2$ error, which is a class 2 error. With no further jumps, changing the path should apply the desired logical gate in the error space $\mathcal{H}_{\sigma_2^x}$ with the final state $\ket{\psi(T)} = \sigma^x_2 G\ket{\bar{\psi}}$.
\begin{figure}[htbp]
    \centering

    \subfigure[]{%
        \includegraphics[width=0.47\textwidth]{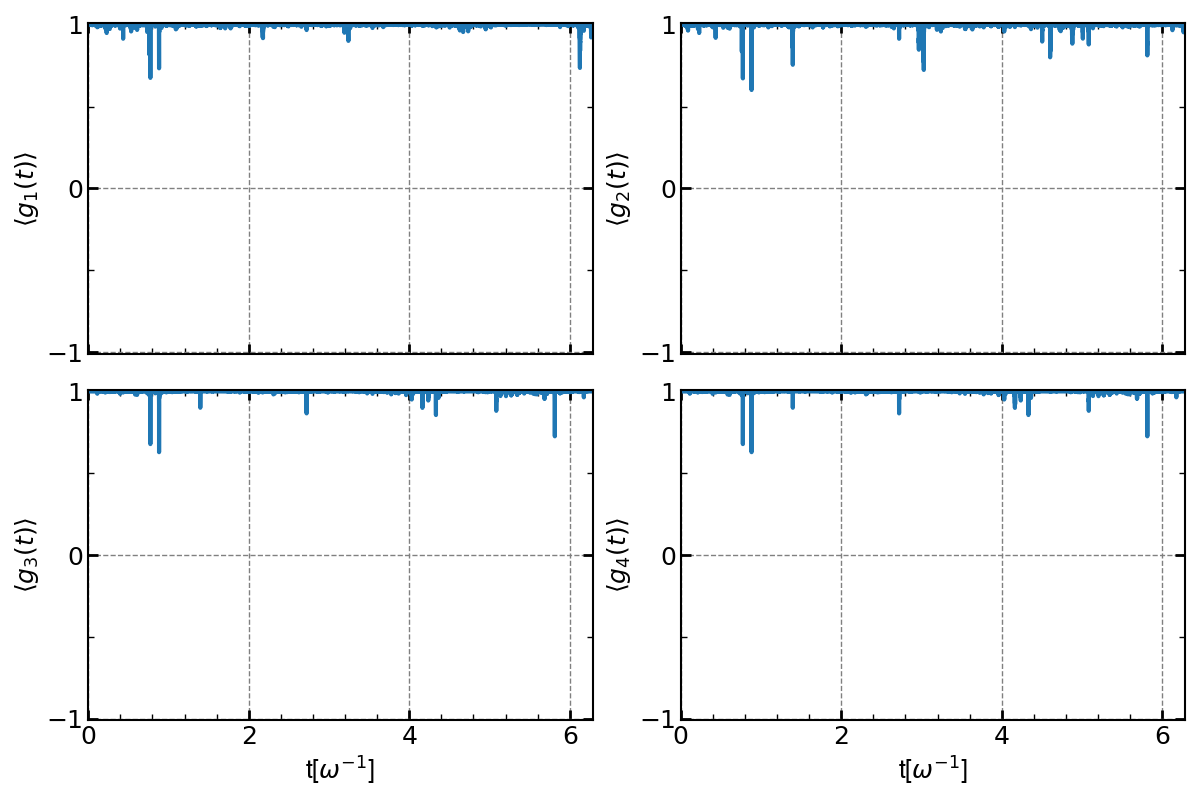}
        \label{fig:no_error_traj}
    }
    
    \subfigure[]{%
        \includegraphics[width=0.47\textwidth]{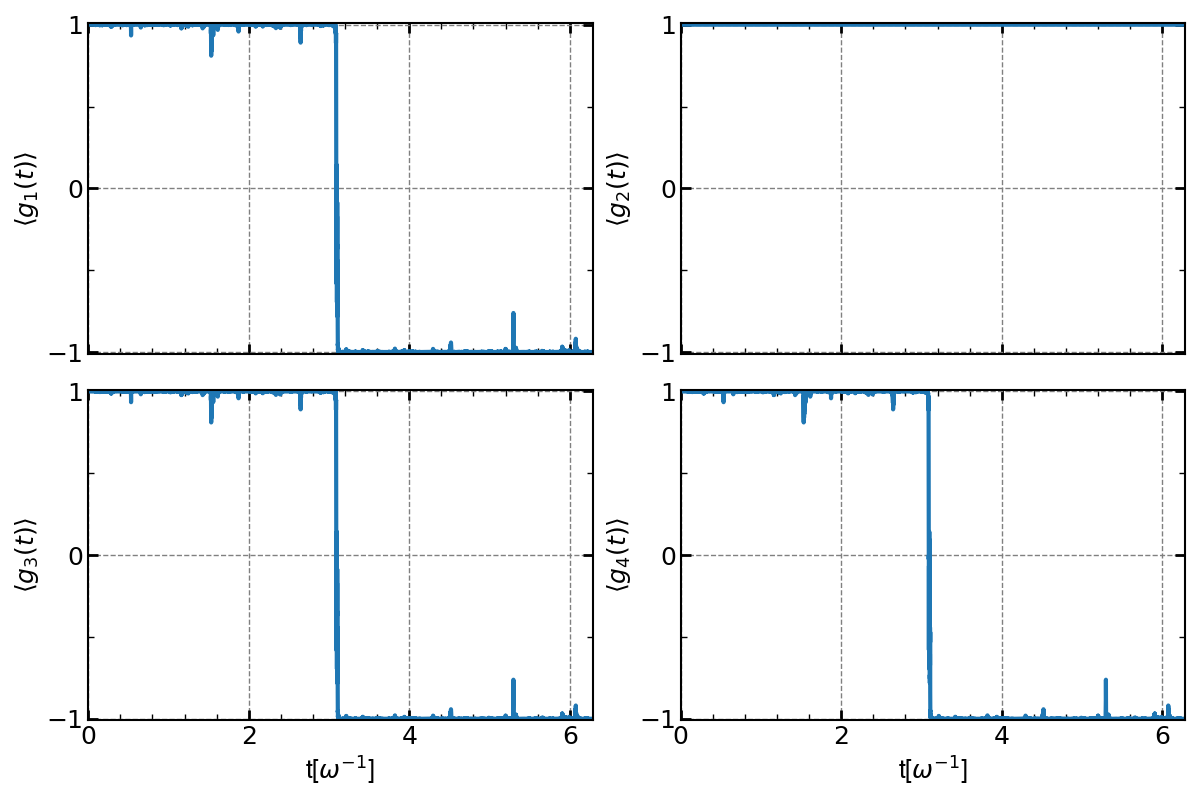}
        \label{fig:X_error_traj}
    }
    
    \subfigure[]{%
        \includegraphics[width=0.47\textwidth]{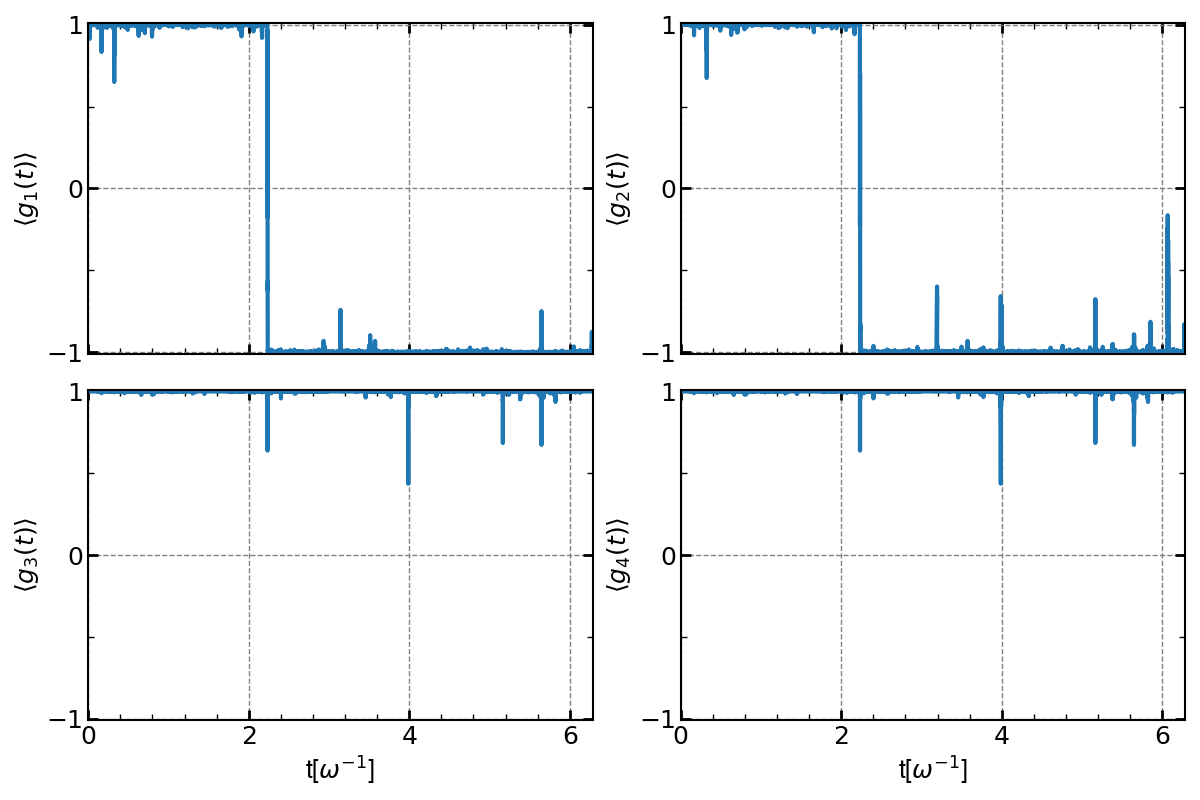}
        \label{fig:env_error_traj}
    }

    \caption{Evolution of the expectation values of the rotated stabilizer generators. (a) The expectation values remain close to $+1$, which indicates that there is no error during the evolution. (b) the error rate is intentionally set to $\gamma=0$, and the rotation rate is chosen such that $\omega/\kappa=0.1$ (corresponding to a jump probability of $\approx 36\%$). We observe a syndrome $1011$ at $t\approx 3$ corresponding to an $X$ error purely due to non-adiabatic rotation of the code space. (c) A different trajectory with $\gamma/\kappa=\mathcal{O}(10^{-4})$, and $\omega/\kappa=\mathcal{O}(10^{-2})$. The syndrome $1100$ corresponds to $\sigma^x_2$ error.}
    \label{fig:Markovian_trajs}
\end{figure}

Once a Markovian error (including a measurement-induced error) makes the state jump to a syndrome space, we change the path from $V(t)$ to $\tilde{V}(t)$ with an appropriate new logical rotation angle $\tilde{\theta}$. We define a figure of merit for a state $\rho$ as follows:
\[
\label{eq:fid_targ_st}
    F(\rho) = \max_{E \in \mathcal{E}} |\bra{\bar{\psi}}G^\dagger E^\dagger \rho EG\ket{\bar{\psi}}|,
\]
i.e., the maximum fidelity between a mixed state $\rho$ and any of the desired pure error states. \cref{fig:fault_tolerance} shows the ensemble average of this metric, evaluated using the final states from 2000 trajectories. Here, $\omega/\kappa = 0.1$. This shows that by changing the path the effect of Markovian errors can be suppressed. We note that the path-adaptation protocol considered here accounts only for a single jump, but in principle it can be generalized to handle an arbitrary number of jumps, thereby achieving the desired level of error correction.

\begin{figure*}
    \centering

    \subfigure[]{%
        \includegraphics[width=0.48\textwidth]{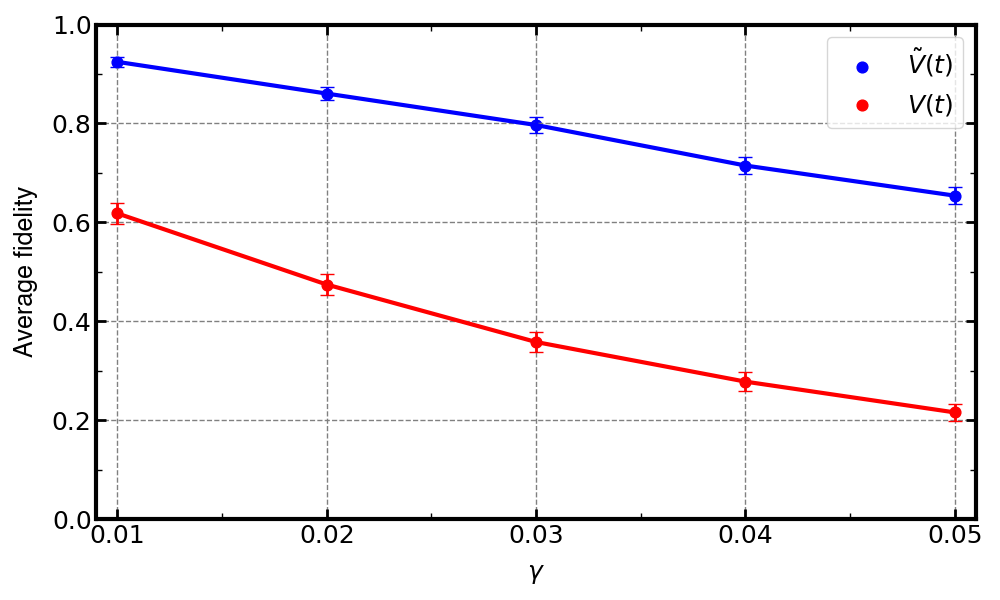}
        \label{fig:fault_tolerance}
    } 
    \subfigure[]{%
        \includegraphics[width=0.48\textwidth]{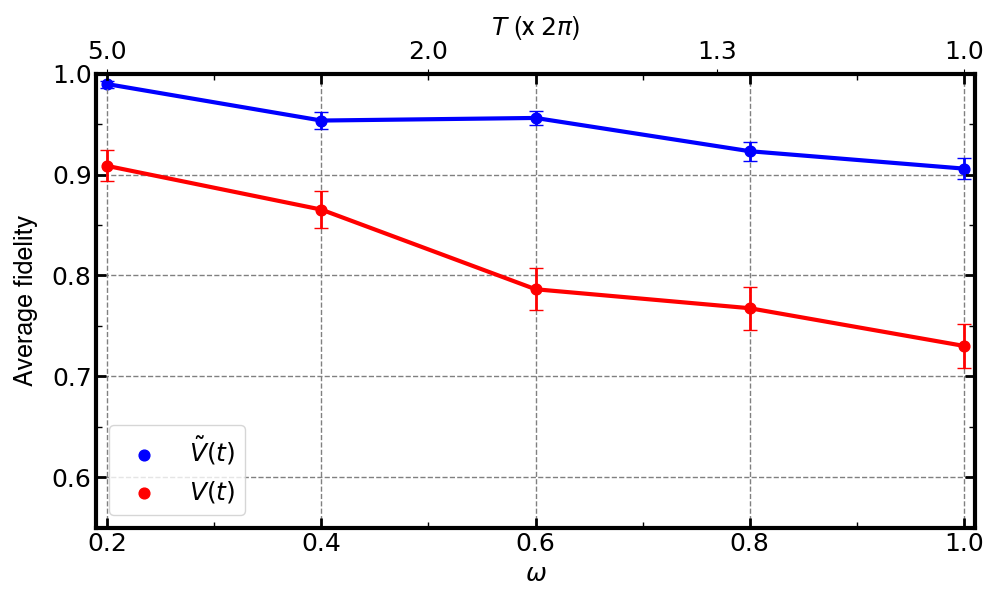}
        \label{fig:accelerated_gates}
    }

    \caption{Average fidelity $\mathbb{E}[\mathcal{F}(\rho(T))]$ as a function of (a) the error rate $\gamma$, where $\rho(t)$ is obtained by numerically integrating \cref{eq:sme_313}, (b) the rotation rate $\omega$ (equivalently the final evolution time $T$). When the state jumps to a syndrome space at $\tau$, the path is changed from $V(\tau)$ to $\tilde{V}(\tau)$. The code considered is the $[\![3, 1, 3]\!]$ bit-flip code with $2$ ancilla qubits initialized in $\ket{00}$. The logical operator $H=\otimes_{i=1}^3\sigma_i^z$, the logical rotation angle $\theta=\pi/6$ and $X = \sigma_1^x\sigma_3^z\sigma^x_4\sigma^x_5$. The plots are averaged over an ensemble of $2000$ trajectories.}
\end{figure*}

\subsection{Accelerated gates}\label{sec:acc_gates}

By construction, the total time required to implement the procedure and apply the desired logical gate is 
\begin{equation}
    T = \frac{2\pi}{\omega}.
\end{equation}
The choice of the angular frequency \(\omega\) is crucial. If \(\omega\) is too large, the system undergoes rotations that are too rapid, increasing the likelihood of transitions into error subspaces. Conversely, if \(\omega\) is too small, the evolution becomes slow, and prolongs the gate time.  

As shown in the previous section, the errors induced by continuous measurement are \emph{Markovian jumps}. This enables real-time modification of the evolution path in order to correct for such errors. Consequently, the strict requirement of adiabaticity can be relaxed: a slower, strongly adiabatic evolution is not strictly necessary to achieve a high success probability.  

Equivalently, it becomes possible to decrease the gate time while still realizing the intended holonomy. \cref{fig:accelerated_gates} illustrates this---we plot the ensemble-averaged fidelity of the target state,
as defined in Eq.~\eqref{eq:fid_targ_st}, against the total gate time \(T\). The data is averaged over \(2000\) stochastic trajectories. In this simulation, we use the \([\![3,1,3]\!]\) bit-flip code with \(H = \bar{\sigma}^x\), \(X = \sigma^x_1 \sigma^z_3\), and the rotation angle \(\theta = \frac{\pi}{6}\). The results show that for a given success probability, the gate time can be reduced when the path is adaptively modified during the evolution, as compared to the case where the path remains fixed. This shows an example where modifying the path mid-flight can accelerate holonomic gates.

\section{Discussion}\label{sec:discussion}

\emph{Summary---}In this paper, we investigated the influence of environmental noise on measurement-based holonomic quantum computation (MHQC), First, we considered the effects of time-correlated (non-Markovian) noise. The protocol inherently involves measuring stabilizers. This effectively truncates the memory kernel and suppresses non-Markovian noise due to the quantum Zeno effect. We showed this in two simple examples: constant Hamiltonian errors and low-frequency $1/f$ noise. However, when the noise is not correlated in time (i.e., is Markovian), simply measuring the stabilizers does not suppress errors. In that case, we can use the measurement currents to perform error correction.

We proposed a way to perform Markovian error correction while \emph{simultaneously} performing MHQC. Markovian errors cause the state to jump into one of the (rotated) syndrome spaces: the stabilizer measurements reveal which space through the (rotated) syndrome. A crucial point to note is that, unlike discrete-time QEC, this protocol can potentially acquire an additional logical phase rotation depending on the structure of the error. However, we showed that this phase can be uniquely identified and we can compute the instantaneous error state. Using this decoded error, we showed how to change the evolution path (with no additional overhead) so that, at the end of the evolution, the system is steered back to the original \emph{error} space while achieving the desired logical gate. The choice to bring the system to the error space is arbitrary; we could instead bring it back to the original code space. However, the former may be preferable in some cases, such as when the error occurs during the final stage of the evolution. That would then require us to perform an additional rotation to bring the system from the error space to the code space---performing this holonomically (and hence, adiabatically) is unnecessary. Since all errors are assumed to be Pauli operators, we can instead choose to bring the system to the original error space and classically keep a track of the Pauli error until a non-Clifford gate needs to be performed, at which point the accumlated Pauli error is corrected transversally.

Next, we considered the influence of occasional measurement-induced errors. These arise when the code is not rotated adiabatically. Just like environmental errors, these errors can be treated as Markovian jumps. We showed that when certain conditions are satisfied by the code, these errors can be uniquely identified throughout the evolution. We also showed how to steer the path to correct the measurement-induced error while achieving the correct logical gate. This method also offers flexibility in choosing the rotation rate: gates can be made faster when we allow the ability to change the path.

\emph{Future directions---}In this work, we showed how to counter the effect of a single Markovian correctable error during the application of the logical gate. A natural extension is to account for multiple correctable errors. Continuous measurements give a significant advantage to the logical computation: during the runtime of the gate, the measurements reveal the errors (with a small latency) through the syndrome. This allows the user to keep track of \emph{all} the errors during the evolution. Then either corrections can be made or the errors tracked on-the-fly, to bound the components of the uncorrectable errors. This capability is limited essentially by the syndrome decoding problem. For structured codes, such as the surface code or LDPC codes, there exist known efficient decoders. For instance,  the minimum-weight perfect-matching (MWPM) algorithm gives a polynomial-time decoder for the surface code. Conventionally, in a surface code memory experiment with Pauli Markovian noise, the stabilizers are measured at regular time intervals, and the entire syndrome volume is utilized to identify the errors. This technique can be modified to input continuous-time measurement currents that flip signs at irregular time intervals. It is also an interesting engineering problem to find effective filtering techniques while also limiting the classical processing overhead.

This proposal relies on the ability to perform the continuous measurements of stabilizer generators while simultaneously rotating them. While we have shown a theoretical proof of concept, it is important to describe a physical mechanism for such measurements to demonstrate their experimental feasibility. One way to make such a measurement is to couple the data qubit register, encoded in a stabilizer code, with a ``monitor'' register through a Hamiltonian that produces Rabi oscillations on the ancilla only when the data register has components outside of the instantaneous code space \cite{chen_brun}. Simultaneously measuring the ancilla register is then equivalent to weakly measuring the rotated stabilizers. A goal for future work is to demonstrate procedures that are practical in near-term quantum processors.

In this work, we have shown that non-Markovian noise can be suppressed using simple examples, and we conjecture that a broad class of temporally-correlated noise models can be analogously suppressed. Further confidence in this picture came from the analogy with Dynamical Decoupling (DD) which has been shown to be strictly related to the quantum Zeno effect. Nevertheless, \cite{Burgarth_2021} showed a counterexample of a noise process with non-Markovian behavior that cannot be suppressed by Dynamical Decoupling. And on the other hand, certain models also exist \emph{without} a Zeno region that can be decoupled by DD \cite{dd_markov}. Therefore, it would be useful to identify a mathematically rigorous argument for a connection between the initial decay profile and decoupling.

This proposal serves as a tool implement non-Clifford logical gates in stabilizer codes. Magic-state distillation, lattice surgery and code-switching protocols are alternative techniques to achieve the same goal. The first utilizes special quantum states, called ``magic states'' to complete a universal set of logical operations via gate teleportation \cite{Sales_Rodriguez_2025}. The second involves ``stitching'' and ``cutting'' patches of topological codes to create logical gates \cite{Horsman_2012}. The third is analogous to holonomic computation: it allows one to switch between one or more codes to perform certain gates transversally to maintain fault-tolerance \cite{code_switch}. Therefore, it is desirable to compare the resource costs associated with each protocol with this measurement-based HQC.

\acknowledgements

This material is based upon work supported by, or in part by, the U. S. Army Research Laboratory and the U. S. Army Research Office under contract/grant number W911NF2310255, and by NSF Grant FET-2316713.


\bibliography{references} 

\appendix

\begin{widetext}

\section{Quantum Zeno dynamics}\label{app:qze}

Suppose a quantum system in the Hilbert space $\mathcal{H}$ is governed by a series of measurements by a time-dependent family of observables
\[
    O(t)=U(t, 0)O_0U^\dagger(t, 0),
\]
where $U(t, 0)$ is a unitary matrix, and $O_0 \equiv O(0)$. For the measurements to have a non-trivial effect on the state, we require that
\[
    [U(t_1, 0), U(t_2, 0)] \neq 0 \quad \forall t_1 \neq t_2,
\]
since otherwise, projectively measuring $O(t)$ trivially does nothing to the state. In a small time-interval $\delta t$, the evolution unitary can be described as
\[
    U(t+\delta t, t) = \exp(-iH(t)\delta t) + \mathcal{O}(\delta t^2),
\]
where $H(t)$ can be time-dependent Hamiltonian. It is easy to see that when $\delta t \ll 1$, $U(t+\delta t, t) \approx I$, i.e., the unitary is \emph{weak}; the measurement observable is also rotated by a small amount. Observe that $O(t)$ always remains Hermitian, and its eigenvalues remain unchanged since unitary transformations preserve eigenvalues. Then we can write the rotated observable as
\[
    O(t)=\sum_s \alpha_s \mathbb{P}_s(t),
\]
where $\mathbb{P}_s(t)$ is the projector onto the $\alpha_s$-eigenspace; let us denote it by $\mathcal{H}_s(t)$. Note that we can also write $O(t)$ as
\[
    O(t) = \sum_s \alpha_s \left[U(t, 0)\mathbb{P}_s(0)U^\dagger(t, 0)\right],
\]
i.e., the projectors on to the eigenspaces of the rotated observables also evolve through the same unitary transformation.

Suppose a state $\ket{\psi(t)} \in \mathcal{H}_0(t)$ is subject to a projective measurement by $O(t+\delta t)$. Then the probability that the state stays in $\mathcal{H}_0(t+\delta t)$ is
\[
\begin{aligned}
    p_0 &= \bra{\psi(t)} \mathbb{P}_0(t+\delta t) \ket{\psi(t)} \\
    &= \bra{\psi(t)} U(t+\delta t, t) \mathbb{P}_0(t) U^\dagger(t+\delta t, t) \ket{\psi(t)} \\
    &= 1 - \mathcal{O}(\delta t^2),
\end{aligned}
\]
i.e., starting from the $\alpha_0$-eigenspace, when the time between the subsequent measurements $\delta t \ll 1$, the probability of evolving into an eigenspace with the same eigenvalue is close to unity. In other words, all states are ``confined'' to their respective subspaces through the evolution; these subspaces are formally known as \emph{Quantum Zeno Subspaces} (QZS). The state evolves into
\[
\begin{split}
    \ket{\psi(t+\delta t)} &= \frac{\mathbb{P}_0(t+\delta t)\ket{\psi(t)}}{\sqrt{p_0}} \\
        &= \ket{\psi(t)} -i[H(t), \mathbb{P}_0(t)]\delta t + \mathcal{O}(\delta t^2).
\end{split}
\]
Under the limit $\delta t \rightarrow 0$, we obtain a Markovian differential equation for the state evolution:
\[\label{eq:app:zeno_dynamics}
    \ket{\mathrm{d}\psi} = -i[H(t), \mathbb{P}_0(t)]\ket{\psi}\mathrm{d}t.
\]
It is important to note that the evolution is \emph{not} unitary since $\tilde{H}(t) \equiv [H(t), \mathbb{P}_0(t)]$ is anti-Hermitian. However, as we shall see, such dynamics can obtain a closed loop holonomy in a Zeno subspace.

\section{White noise generates Lindblad dynamics}\label{app:white_noise_lindblad}

We consider a single qubit for simplicity. But the results are analogous for multiple qubit systems. Suppose the Hamiltonian is given by
\[
 H(t) = \lambda(t) X,
\]
where $X$ denotes the Pauli $X$ operator, and $\lambda(t)$ denotes white noise:
\begin{subequations}
\begin{align}
    \braket{\lambda(t)} &= 0 \quad \forall t\\
    \braket{\lambda(t), \lambda(t')} &= \sigma^2 \delta(t-t').\label{eq:whitedelta}
\end{align}
\end{subequations}
The state dynamics under the action of $H(t)$ are given by
\[
    \dot \rho = -i [H(t), \rho(t)].
\]
Let us define an ansatz for the It\^o stochastic differential equation (SDE) for the associated unitary propagator $U_t$:
\[
    \mathrm{d}U_t = AU_t \mathrm{d}t + BU_t \mathrm{d}W_t,
\]
where $W_t$ is a Wiener process with increments satisfying
\[
    \mathbb{E}[\mathrm{d}W_t] = 0, \; \mathrm{d}W_t^2 = \sigma^2 \mathrm{d}t,
\]
and the operator coefficients $A$ and $B$ that are independent of $\mathrm{d}W_t$.
Applying the It\^o product rule on $U_tU_t^\dagger=I$, we obtain
\[
    (A+A^\dagger + \sigma^2 BB^\dagger)\mathrm{d}t + (B+B^\dagger)\mathrm{d}W_t = 0.
\]
Collecting the coefficients of $\mathrm{d}t$ and $\mathrm{d}W_t$ gives the conditions
\[
  \begin{split}
      A+A^\dagger + \sigma^2 BB^\dagger &= 0 \\
      B + B^\dagger = 0.
  \end{split}  
\]
Let us choose $B=-iX$; that satisfies that $B$ is anti-Hermitian. A minimal solution for $A$ that satisfies the other condition is $A=-\sigma^2 X/2$. Thus, the It\^o SDE for $U_t$ is
\[
    \mathrm{d}U_t = -\frac{\sigma^2 X}{2}U_t \mathrm{d}t - i XU_t \mathrm{d}W_t. 
\]
Let $\rho(t) = U_t \rho(0) U_t^\dagger$, and apply the It\^o product rule:
\[
    \mathrm{d}\rho = -i[X, \rho]\mathrm{d}W_t + \sigma^2 \left(X\rho X - \frac{1}{2}\{X^2, \rho\} \right)\mathrm{d}t.
\]
We observe that this equation contains a stochastic Liouvillian superoperator that describes the unitary evolution, and a deterministic Lindblad superoperator that describes dissipation. Note that this equation describes the evolution under white noise for a \emph{single} trajectory. Taking expectation and using $\mathbb{E}[\mathrm{d}W_t]=0$ yields the deterministic master equation for the ensemble-averaged state $\hat\rho(t) = \mathbb{E}[\rho(t)]$:
\[
    \dot{\hat\rho} = \sigma^2 \left(X\hat\rho X - \hat\rho \right).
\]

\section{Instantaneous code state}\label{app:rot_code_state}

If the measurement of the instantaneous code space projectors always returns the outcome $+1$, then the state always stays in the $+1$-Zeno subspace. Then the curve on $\mathcal{B}$ is
    \begin{equation}
        \mathbb{P}(t) = V(t)\mathbb{P}_0V^\dagger(t).
    \end{equation}
    The associated curve on $\mathcal{F}$ can be found by solving the differential equation \cref{eq:horizontal_lift}:
    \begin{equation}\label{eq:lift_deq_path_V}
        \frac{\mathrm{d}h(t)}{\mathrm{d}t} = -L^\dagger(0)\left(V^\dagger(t)\frac{\mathrm{d}V(t)}{\mathrm{d}t}\right)L(0)h(t).
    \end{equation}
    Substituting \cref{eq:rotation_unitary} in \cref{eq:lift_deq_path_V}, we obtain
    \begin{equation}
    \begin{split}
        \frac{\mathrm{d}h(t)}{\mathrm{d}\phi} &= -i\omega L^\dagger(0)\left(\frac{\theta}{2\pi}H e^{i2\omega t X} + X\right)L(0)h(t) \\
        & = -i\omega \frac{\theta}{2\pi}\cos(2\omega t)L^\dagger(0)HL(0)h(t) \\
        & \quad +\omega \frac{\theta}{2\pi}\sin(2\omega t)L^\dagger(0)HXL(0)h(t) \\
        & \quad -i\omega L^\dagger(0)XL(0)h(t).
    \end{split}
    \end{equation}
    The columns of $L(0)$ represent the basis states of the original code space, i.e., they are the code words. Since $X$ is an error operator, $XL(0)$ and $HXL(0)$ are the basis states in the orthogonal subspace, making $L^\dagger(0)XL(0) = L^\dagger(0)HXL(0) = 0$. Then the solution to \cref{eq:lift_deq_path_V} is
    \begin{equation}
    \begin{split}
        h(t) &= \exp\left(-i\frac{\theta}{2\pi}L^\dagger(0)HL(0)\int_{t'=0}^{t}\cos(2t')\mathrm{d}t'\right)h(0)\\
        &= \exp\left(-i\frac{\theta}{4\pi}\sin(2\omega t)L^\dagger(0)HL(0)\right)h(0).
    \end{split}
    \end{equation}
    By assumption, $h(0) = I_{2^k}$. Moreover, it is easy to see that $L^\dagger(0)e^AL(0) = \exp[L^\dagger(0)AL(0)]$ for any operator $A$. Then
    \begin{equation}
        h(t) = L^\dagger(0)\exp\left(-i\frac{\theta}{4\pi}\sin(2\omega t)H\right)L(0),
    \end{equation}
    and the curve $L(t)$ can be written as 
    \begin{equation}
        L(t) = V(t)\exp\left(-i\frac{\theta}{4\pi}\sin(2\omega t)H\right)L(0),
    \end{equation}
    where we used the fact that $L(0)L^\dagger(0) = \mathbb{P}_0$ commutes with $H$. Using \cref{eq:final_state_with_vertical_lift}, the state at $t$ can be written as
    \begin{equation}
        \ket{\bar{\psi}(t)} = V(t)\exp\left(-i\frac{\theta}{4\pi}\sin(2\omega t)H\right)\ket{\bar{\psi}}.
    \end{equation}

\section{Updated rotation angles}\label{app:new_rot_angles}

Depending on whether $E$ commutes or anticommutes with $H$ and $X$, we have four cases. The new rotation angles are:
\begin{table}[ht]
    \centering
    \setlength{\tabcolsep}{4em}
    \begin{tabular}{ccc}
    \toprule
    \textbf{Case} & $\boldsymbol{\mathcal{H}_E(\tau)}$ & $\boldsymbol{\mathcal{H}_{EX}(\tau)}$ \\
    \midrule
    $[E, H] = [E, X] = 0$ & 
    $0$ & 
    N/A \\[2em]

    $[E, H] = \{E, X\} = 0$ & 
    $0$ & 
    $- \dfrac{\beta_{EX}(\tau)+2\theta}{1 - \dfrac{\tau}{T} \big(1 - \sinc(2\omega\tau)\big)}$ \\[2em]

    $\{E, H\} = [E, X] = 0$ & 
    $ - \dfrac{\beta_E(\tau)+2\theta}{1 - \dfrac{\tau}{T} \big(1 - \sinc(2\omega\tau)\big)}$ & 
    $ \dfrac{\beta_{EX}(\tau)}{1 - \dfrac{\tau}{T} \big(1 - \sinc(2\omega\tau)\big)}$ \\[2em]

    $\{E, H\} = \{E, X\} = 0$ & 
    $ - \dfrac{\beta_E(\tau)+2\theta}{1 - \dfrac{\tau}{T} \big(1 - \sinc(2\omega\tau)\big)}$ & 
    $0$ \\
    \bottomrule
    \end{tabular}
    \caption{Rotation angles $\mathcal{H}_E(\tau)$ and $\mathcal{H}_{EX}(\tau)$ under different commutation and anti-commutation relations between the error operator $E$, the Hamiltonian $H$, and the Pauli operator $X$. For class 0 errors, the path remains unchanged, whereas class 1 and 2 errors require adjustment to $\tilde{V}(t)$.}
    \label{tab:new_rot_angles}
\end{table}

\end{widetext}

\end{document}